\documentclass[11pt,a4paper]{article}
\usepackage{fullpage}
\usepackage{hyperref}
\usepackage{graphicx}

\usepackage{xcolor}

\usepackage{amsmath,amsthm,amsfonts,amssymb,thmtools}
\allowdisplaybreaks

\usepackage[capitalize]{cleveref}

\newtheorem{theorem}{Theorem}
\newtheorem{corollary}[theorem]{Corollary}
\newtheorem{proposition}[theorem]{Proposition}
\newtheorem{lemma}[theorem]{Lemma}
\newtheorem{definition}[theorem]{Definition}
\newtheorem{observation}[theorem]{Observation}
\newtheorem{assumption}[theorem]{Assumption}

\newcommand{\cond}{\mid}
\newcommand{\E}{\mathbb{E}}
\newcommand{\Tmar}{\hat{T}_{\mathrm{mix}}}

\title{The Careless Coupon Collector's Problem}
\author{Emilio Cruciani\thanks{European University of Rome, Italy.}\\\texttt{emilio.cruciani@unier.it} \and Aditi Dudeja\thanks{This work was initiated while the author was affiliated with the University of Salzburg. This project has received funding from the European Research Council (ERC) under the European Union’s Horizon 2020 research and innovation programme (grant agreement No 947702)}\\ \texttt{aditidudeja@cuhk.edu.cn}}
\date{}

\begin{document}

\maketitle

\begin{abstract}
We initiate the study of the Careless Coupon Collector's Problem (CCCP), a novel variation of the classical coupon collector, that we envision as a model for information systems such as web crawlers, dynamic caches, and fault-resilient networks.
In CCCP, a collector attempts to gather $n$ distinct coupon types by obtaining one coupon type uniformly at random in each discrete round, however the collector is \textit{careless}: at the end of each round, each collected coupon type is independently lost with probability $p$.
We analyze the number of rounds required to complete the collection as a function of $n$ and $p$. 
In particular, we show that it transitions from $\Theta(n \ln n)$ when $p = o\big(\frac{\ln n}{n^2}\big)$ up to $\Theta\big((\frac{np}{1-p})^n\big)$ when $p=\omega\big(\frac{1}{n}\big)$ in multiple distinct phases.
Interestingly, when $p=\frac{c}{n}$, the process remains in a metastable phase, where the fraction of collected coupon types is concentrated around $\frac{1}{1+c}$ with probability $1-o(1)$, for a time window of length $e^{\Theta(n)}$.
Finally, we give an algorithm that computes the expected completion time of CCCP in $O(n^2)$ time.
\end{abstract}

\section{Introduction}
The coupon collector's problem is a fundamental paradigm in probability theory and combinatorics, with a history dating back to 1708 as a problem formulated by Abraham de Moivre~\cite{ferrante2014coupon} and still studied today (see \cref{sec:related} for a discussion of related work).
In its classical form, a collector seeks to complete a collection of $n$ distinct coupons by receiving one coupon uniformly at random in each discrete round. 
The natural question is that of understanding the expected number of rounds required to complete the collection.
Beyond its theoretical elegance, the model has become a cornerstone in computer science, widely utilized for the analysis of randomized algorithms and data structures, providing an essential tool for understanding properties of various probabilistic processes~\cite{raghavan1995randomized,mitzenmacher2017probability}.

In this paper, we initiate the study of a novel variation that we call the Careless Coupon Collector's Problem (CCCP).
Unlike the traditional collector who retains every acquired item, the \textit{careless collector} may lose progress over time.
This model serves as a natural abstraction for modern dynamic information systems, such as \textit{web crawlers} maintaining fresh indices, \textit{distributed caches} with randomized expiration policies, and \textit{fault-resilient networks} where sensors may periodically disconnect. 
The modeling of these systems through CCCP is discussed in more detail in \cref{sec:applications}.

\paragraph*{Formal definition of CCCP}
Fix an integer $ n \ge 1 $, the number of distinct coupon types, and a parameter $ p = p(n) \in [0,1] $, a probability loss parameter.
Let $ [n] := \{1,2,\ldots,n\} $.
We define a discrete-time Markov chain $ (S_t)_{t\ge0} $ with state space $ 2^{[n]} $ modeling the set of coupons acquired by the careless collector in round $t$.
In particular, starting from $S_0 = \emptyset$, in each round $t\ge 0$ we define
\[
    S_{t+1} = (S_t \cup \{C_t\}) \setminus \{\, i : L_{i,t} = 1 \,\},
\]
where:
\begin{itemize}
\item $ C_t $ is a random element of $ [n] $ drawn independently and uniformly at random, modeling the new coupon acquired by the collector in round $t$;
\item $ \{L_{i,t}\}_{i\in[n]} $ are independent and identically distributed indicator random variables with parameter $p$, each modeling if the collector loses coupon $i$ in round $t$ (i.e., $L_{i,t}=1$) or keeps it (i.e., $L_{i,t}=0$).
\end{itemize}
Equivalently, at each step the collector first gains the coupon $C_t$,
then independently loses each coupon in their possession (including $C_t$) with probability $p$.
Note that obtaining a coupon $C_t \in S_t$ does not have any effect on the collection. 
Also note that the collector may lose multiple coupons in the same round, including the one they just obtained.
See \cref{fig:cccp fun} for an illustration of CCCP.

\begin{figure}[ht!]
    \centering
    \includegraphics[width=0.8\linewidth]{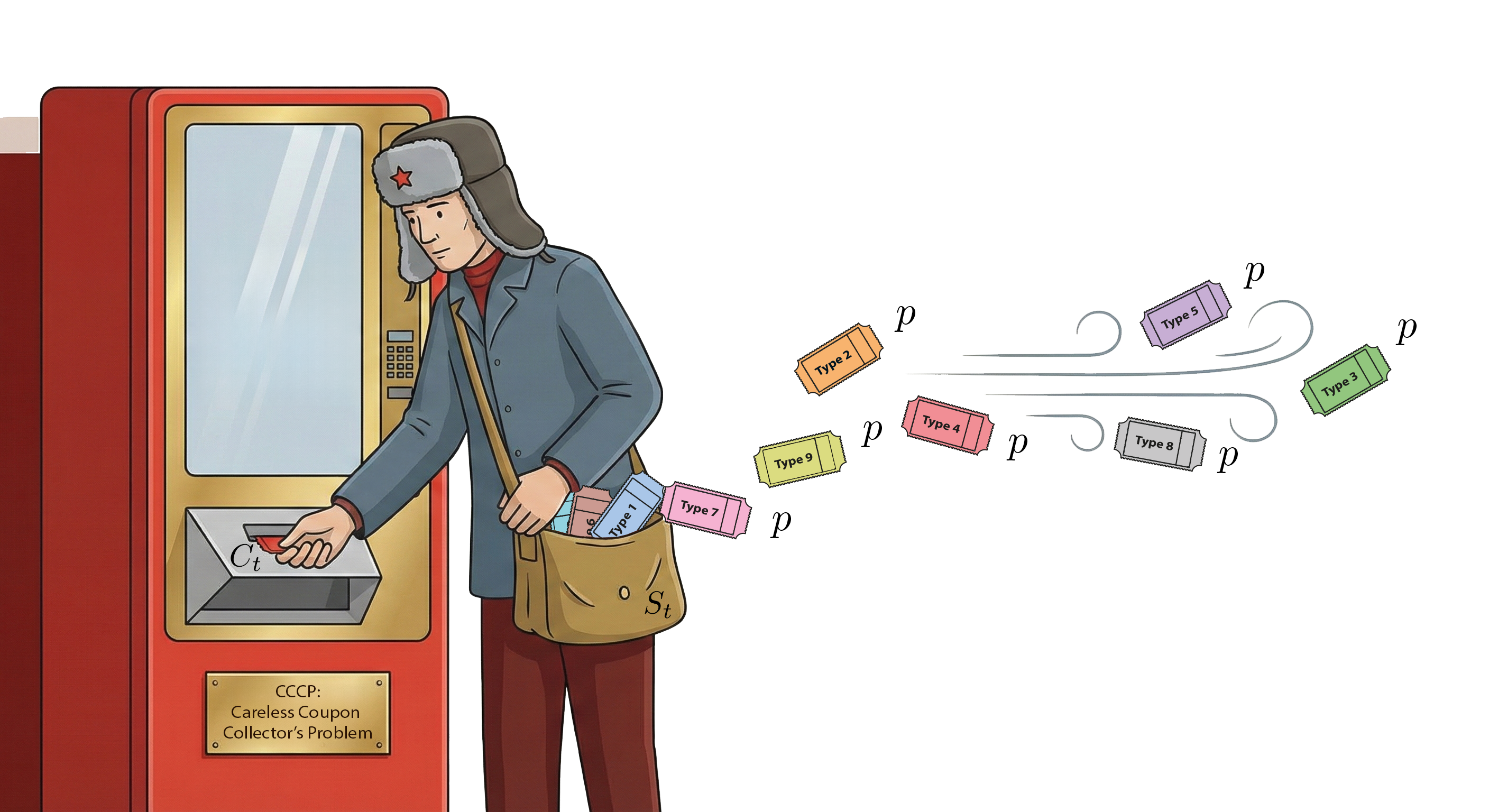}
    \caption{An illustration of a careless coupon collector.}
    \label{fig:cccp fun}
\end{figure}

We are interested in the first round $t$ such that the collector acquires all coupon types, namely at the following \textit{hitting time}
\[
    T_{n,p} := \min\{\, t \ge 0 : S_t = [n] \,\}.
\]
If $\{\, t \ge 0 : S_t = [n] \,\} = \emptyset$ the hitting time is undefined and we say $T_{n,p} = \infty$.
Our goal is to study the asymptotic behavior of 
$\E[T_{n,p}]$ as a function of $n$ and $p$.

\paragraph*{Behavior of the dynamics}
As discussed in detail in \cref{sec:someobservations}, the process generalizes the classical coupon collector's problem.
However, the loss of coupons makes CCCP significantly harder to analyze, since it breaks its monotonic growth and introduces ``catastrophic'' downward transitions (see e.g., \textit{skip-free Markov Chains} \cite{feller1971introduction} or \textit{birth-and-death processes with catastrophes}~\cite{brockwell1985extinction}), leading to a more complex behavior.

In particular, the loss parameter $p$ has two main effects on the dynamics.
The first is that of slowing down the completion of the collection. As $p$ increases, the expected hitting time undergoes a sharp qualitative change. 
Intuitively, for small $p$, the losses are negligible and the process behaves essentially like the classical coupon collector, completing quickly. As $p$ grows, losses start to matter and the hitting time increases rapidly, transitioning through distinct regimes. In particular, small variations of $p$ can trigger large, even exponential, increases in the hitting time, reflecting a phase-like shift from fast collection to a regime dominated by rare events (i.e., lucky streaks of acquisitions that allow for the completion of the collection).

The second effect of $p$ is to induce a metastable phase. Intuitively, the dynamics is governed by two competing forces: the acquisition of new coupons and the loss of already collected ones, both depending on $n,p$, and on the current number of collected coupons.
As the process evolves, these forces balance each other, creating an attractor of the dynamics that traps the system in a set of states where the fraction of collected coupons fluctuates around a limiting value $q_* = q_*(n,p)$.

In \cref{fig:simulation} we report some simulations illustrating these phenomena.
In particular, on the left we see the first effect of $p$ on the hitting time, that quickly transitions from being log-linear to exponential.
On the right, we observe how the fraction of collected coupons quickly stabilizes around the limiting fraction $q_*$. When $p=\Omega(1/n)$, the expected number of coupons $nq_*$ held at any given time becomes very small, dropping to a constant or even sub-constant value. In this regime, the natural state of the system is to be nearly empty. Consequently, completing the collection of all $n$ coupons requires the system to deviate drastically from its equilibrium. Reaching the full set essentially becomes a large stochastic fluctuation. In other words, the collector remains trapped in a metastable state where the number of coupons fluctuates around $nq_*$ for an exponential amount of time before a rare event allows to complete the collection.

\begin{figure}[!htbp]
    \centering
    \includegraphics[width=0.495\linewidth]{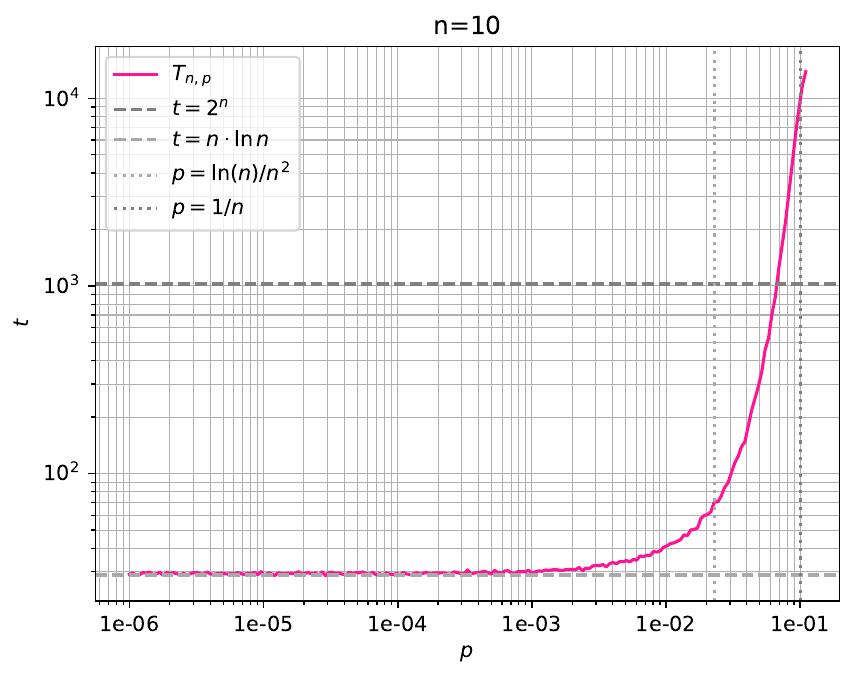}
    \includegraphics[width=0.495\linewidth]{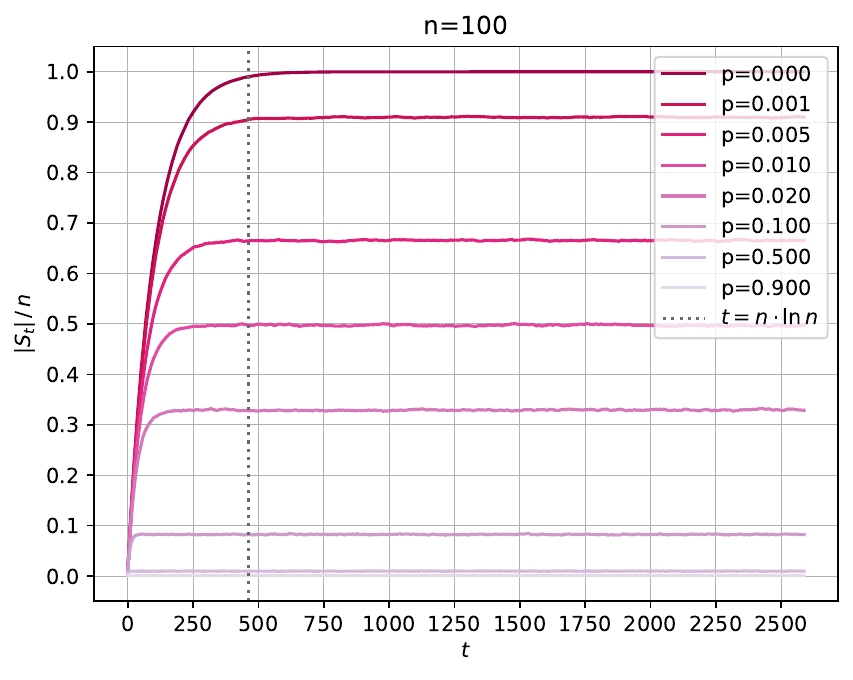}
    \caption{Empirical hitting time and metastable fractions of coupons of CCCP, averaged over 1000 repetitions.
    On the left: average hitting time while varying $p$, for fixed $n=10$; the small value of $n$ allows to wait exponential times for the process to end.
    On the right: average fraction $|S_t| / n$ of coupons in the collection at time $t$, for several values of $p$.}
    \label{fig:simulation}
\end{figure}

\paragraph*{Our contributions}
Our contributions can be summarized as follows:
\begin{itemize}
    \item \textbf{Model:} We introduce CCCP, a novel and natural generalization of the coupon collector's problem that, apart from the mathematical interest, can be used to model modern dynamic information systems (see \cref{sec:applications}).
    \item \textbf{Metastability:} We prove the existence of a \textit{metastable phase}, where the fraction of collected coupons remains concentrated for exponentially-many rounds with probability $1-o(1)$ around a limiting quantity $q_*$ that we exactly characterize as a function of $n,p$ (see \cref{sec:metastability}).
    \item \textbf{Hitting time:} We study the \textit{expected hitting time} at three different levels (see \cref{sec:hitting time}):
    \begin{itemize}
        \item We give an efficient \textit{exact algorithm} for the expected hitting time that runs in time $O(n^2)$ (\cref{sec:algorithm}).
        \item We perform a \textit{mean-field analysis} of the expected hitting time, through a spatiotemporal independence assumption: we give a simple asymptotic formulation that intuitively explains the hitting time of CCCP as a sum of two terms: the (marginal) \textit{mixing time} to the metastable state, and the \textit{escape time} from the metastable trap, waiting for the rare event in which the collection is completed (\cref{sec:mean field}). The expected hitting time described by this analysis essentially matches our simulations we presented earlier.
        \item We provide rigorous \textit{unconditional upper and lower bounds} to the expected hitting time (\cref{sec:bounds}). Our lower bound is asymptotically tight, matching the mean-field analysis.
    \end{itemize}
\end{itemize}

\paragraph*{Future research directions}
As CCCP is a newly proposed model, the primary significance of this work lies in identifying the collector's regimes and the discovery of the metastable phase. We believe this provides a foundation for future research directions:
\begin{itemize}
    \item The most immediate theoretical challenge is to provide a more refined upper bound to the expected hitting time that matches the lower bound, that we believe to be tight as it is asymptotically equal to the mean-field analysis.
    \item In this work, we assume independent and uniform losses. Future studies could investigate losses where each coupon type has its own probability of being loss $p_i$, on the same line as other classical models where the sampling of $C_t$ is not uniform, e.g., \cite{flajolet1992birthday}.
\end{itemize}

\subsection{Related work}\label{sec:related}

As already mentioned, the problem's origins date back to 1708, appearing as a challenge posed by De Moivre \cite{ferrante2014coupon}. 
However, the modern era of the coupon collector's problem began in the mid-20th century with rigorous results regarding hitting times and limit distributions established by von Schelling \cite{von1954coupon}, Newman \cite{newman1960double}, and the seminal work of Erdős and Rényi \cite{erdHos1961classical}.
The theoretical framework was later expanded by viewing the process through different lenses: 
Holst \cite{holst1986birthday} framed the collector's progress as a classic occupancy problem in urn models;
Flajolet et al. \cite{flajolet1992birthday} utilized the context of formal languages and generating functions to analyze birthday and coupon collection paradoxes.

Due to its elegant simplicity, the coupon collector's problem has found utility across diverse sectors.
In engineering it has been used to model reliability and optimal testing strategies \cite{luko2009coupon, dobson2015optimal}. 
In biology to model probabilistic structures in biological systems \cite{adler2005probabilistic}.
Moreover, it is a core concept in the analysis of randomized algorithms \cite{raghavan1995randomized, mitzenmacher2017probability} and has been the subject of extensive surveys regarding optimization \cite{boneh1997coupon}.

Recent research has shifted toward specialized variations that relax classical assumptions.
Under arbitrary sampling distributions, rather than uniform,
Berenbrink and Sauerwald \cite{berenbrink2009weighted} investigated approximations of the hitting times 
while Anceaume et al.~\cite{anceaume2015new, anceaume2016optimization} studied the time required to collect a given number of distinct coupons.
Raz and Zhan \cite{raz2020random} explored a memory-bounded variant where the collector must decide when to stop based on limited space, providing insights into the random-query model of computation.
Other studies have expanded the model to include the collection of multiple copies of each coupon \cite{adler2003coupon}, collecting coupons ``with friends''  \cite{alistarh2021collecting}, competing with other players \cite{krityakierne2025slowest}, through the best-of-the-sample paradigm where one samples a set and keeps the rarest items \cite{xu2011generalized}, and through the notion of super-coupons \cite{athreya2025generalisation}.
The problem continues to be generalized into highly abstract domains, such as quantum information and graph theory.
The introduction of the quantum coupon collector explores sampling in a quantum state framework \cite{arunachalam2020quantum}.
The labeled coupon collector has been applied to recovering bipartite graphs with perfect matchings \cite{berrebi2025labeled}.

Closer to the spirit of this paper, there exist also variants of the coupon collector that consider adversarial elements like resets, where the collector faces penalties or can lose all its coupons during the collection \cite{todic2023coupon, jockovic2024coupon}.

\subsection{Applications}\label{sec:applications}

This section highlights how analyzing key properties of CCCP, such as the expected hitting time and typical metastable states, can inform the design and analysis of dynamic information systems, including web crawlers, distributed caches, and fault-resilient networks. 

\paragraph*{Dynamic data under web crawling}
A web crawler monitors $n$ web pages that continuously change over time.
After visiting a page, its cached copy may become outdated if the page is modified.
We model this by assuming that each cached page becomes stale independently with probability $p$ at every step, a parameter driving the rate of content change.
The crawler revisits one page at a time, replacing stale copies with fresh ones.
The hitting time $T_{n,p}$ is the first moment when all $n$ cached pages are simultaneously up to date.
Only when $p$ is sufficiently small all pages align as fresh at once, corresponding to a complete, system-wide refresh of the index.
Most of the time, the crawler operates in a metastable state, where a stable fraction $q_*$ of pages remain fresh, while the rest lag behind.
The CCCP analysis can be used to estimate how frequently the crawler must revisit pages to maintain a desired level of global freshness, since higher revisit frequency corresponds to smaller $p$ and thus higher overall consistency.

\paragraph*{Cache expiration under load-balancing policies}
A distributed cache maintains $n$ items replicated across servers using a randomized expiration policy for load balancing.
To prevent cache hotspots or synchronization bursts, it periodically invalidate items at random, with each cached entry that expires independently with probability $p$ at every step.
When an item expires, it must be re-fetched from the origin server, similarly to re-collecting a lost coupon.
The hitting time $T_{n,p}$ represents the time until all $n$ items are simultaneously valid in the cache.
In practice, this full synchronization occurs rarely when $p$ is moderately large, while the system typically remains in a metastable regime where a stable fraction $q_*$ of cache entries are valid and the rest are pending refresh.
The CCCP analysis can help determine how frequently updates should occur to maintain high consistency across replicas, since more frequent updates correspond to smaller $p$ and thus shorter expected refresh times.

\paragraph*{Fault resilience in distributed networks}
Consider a network of $n$ sensors or devices that periodically report their status to a central server.
Each device may stop reporting with probability $p$ at each step, e.g., due to disconnection or failure, and the server polls one device at a time to re-establish its connection.
The system is fully synchronized only when all devices have recently reported, a rare event analogous to the hitting time $T_{n,p}$.
Most of the time, the system stays in a metastable regime where a steady fraction $q_*$ of devices are active and up to date. The CCCP analysis helps to determine how frequently the monitoring should occur to maintain high system observability and minimize missing reports.

\section{Preliminaries}\label{sec:prel}

We start the technical part by giving some simple observations regarding CCCP in \Cref{sec:someobservations}. Subsequently, in \Cref{sec:marginalprobability} we analyze the marginal probability of a coupon being present in the collected set of coupons at any time $t$. 
The marginal probability analysis will be our key tool in the remainder of the paper.

\subsection{Some observations on CCCP}
\label{sec:someobservations}

We discuss how $\E[T_{n,p}]$ varies depending on different values of $p$.
\begin{observation}\label{obs:classical ccp}
For $p = 0$, CCCP coincides with the classical coupon collector's problem, for which 
\[
    \E[T_{n,0}] = n \cdot H_n,
\]
where $H_n = \sum_{i=1}^{n} \frac{1}{i} = \ln n + O(1)$ denotes the $n$-th harmonic number (see, for example, \cite[Section 2.4.1]{mitzenmacher2017probability}).
Moreover, by standard probabilistic arguments, for any $c>0$ it holds that
\[
    \Pr(T_{n,0} > \E[T_{n,0}] + cn) = o(1).
\]
\end{observation}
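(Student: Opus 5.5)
The plan has two parts: identify the process with the classical collector, then compute the expectation by the standard phase decomposition and get the tail from a union bound.

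\emph{Identification with the classical collector.} When $p=0$, every indicator $L_{i,t}$ is almost surely $0$. Hence $S_{t+1}=S_t\cup\{C_t\}$ almost surely, so $S_t=\{C_0,\dots,C_{t-1}\}$ is exactly the set of distinct types seen in $t$ uniform draws. Consequently $T_{n,0}$ is the classical coupon-collector time.

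\emph{Expectation.} I would split the run into phases: phase $i$ (for $i=0,\dots,n-1$) lasts while $|S_t|=i$. During phase $i$, each round independently produces a new type with probability $(n-i)/n$. So the length of phase $i$ is geometric with mean $n/(n-i)$. Summing by linearity gives
\[
\E[T_{n,0}]=\sum_{i=0}^{n-1}\frac{n}{n-i}=nH_n .
\]
The estimate $H_n=\ln n+O(1)$ follows by comparing the sum with $\int_1^n dx/x$.

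\emph{Tail bound.} I would not use Chebyshev (the variance is $\Theta(n^2)$, giving only a constant bound). Instead I would use a union bound over types. For any $m$,
\[
\Pr(T_{n,0}>m)\le n\left(1-\tfrac1n\right)^m\le n e^{-m/n}.
\]
Take $m=nH_n+cn$ and use $H_n\ge\ln n$. This gives
\[
\Pr(T_{n,0}>\E[T_{n,0}]+cn)\le e^{-c},
\]
which is $o(1)$ as soon as $c=c(n)\to\infty$.

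\emph{Main obstacle: the statement as written is false for fixed $c$.} The phrase ``for any $c>0$'' cannot hold with $c$ a fixed constant. By the Erdős–Rényi limit law \cite{erdHos1961classical}, $(T_{n,0}-n\ln n)/n$ converges to a Gumbel distribution. Since $nH_n=n\ln n+\gamma n+o(n)$, the probability in the statement tends to $1-\exp(-e^{-(c+\gamma)})$, where $\gamma$ is the Euler–Mascheroni constant. This is a positive constant, not $o(1)$. So I would prove and state the bound $\Pr(T_{n,0}>\E[T_{n,0}]+cn)\le e^{-c}$, and read the observation as requiring $c=\omega(1)$. For the paper's purposes this is what matters: it yields $T_{n,0}\le nH_n+o(n\ln n)$ with probability $1-o(1)$, for instance by choosing $c=\ln\ln n$.
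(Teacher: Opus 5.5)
Your proposal is correct. For the expectation it is exactly the textbook geometric-phase argument the paper cites; the paper gives no proof of its own, only that citation plus an appeal to ``standard probabilistic arguments'' for the tail.

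Your objection to the tail claim is also right. For fixed $c>0$ one has $\E[T_{n,0}]+cn=n\ln n+(c+\gamma)n+O(1)$, so by the Erd\H{o}s--R\'enyi law the probability tends to $1-\exp(-e^{-(c+\gamma)})>0$. If you want a check that does not rely on the limit law, apply Bonferroni to the number $Y$ of types still missing after $\lfloor nH_n+cn\rfloor$ draws. This gives $\Pr(Y\ge 1)\ge \E[Y]-\E\binom{Y}{2}\to e^{-x}-\tfrac12 e^{-2x}>0$ with $x=c+\gamma$.

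Hence the observation only holds with $c=c(n)\to\infty$, and your union bound is the right replacement. One small detail: $nH_n+cn$ need not be an integer, so apply the bound at $\lfloor nH_n+cn\rfloor$. This only costs a harmless factor $(1-1/n)^{-1}$, giving $\Pr(T_{n,0}>\E[T_{n,0}]+cn)\le e^{-c}/(1-1/n)$. That weaker form is all the paper's later, informal concentration remark at the end of the mean-field subsection needs.
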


It's obvious that $T_{n,p}$ is stochastically increasing in $p$, as having a higher chance of losing coupons cannot make the collection any faster. 
In particular, the following stochastic domination (that we indicate with $\preceq$) holds true.
\begin{proposition}\label{prop: monotone coupling}
Let $(S_t^{(p_1)})_{t \ge 0}$ and $(S_t^{(p_2)})_{t \ge 0}$ be CCCP chains with loss probabilities $p_1$ and $p_2$.
If $p_1 \le p_2$ then $T_{n,p_1} \preceq T_{n,p_2}$.
\end{proposition}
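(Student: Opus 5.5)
We construct a monotone coupling of the two chains on a common probability space, so that $S_t^{(p_1)} \supseteq S_t^{(p_2)}$ holds for every $t \ge 0$ almost surely. From this the domination is immediate: if $S_t^{(p_2)} = [n]$ then $S_t^{(p_1)} = [n]$, so $T_{n,p_1} \le T_{n,p_2}$ pointwise on the coupling space. In particular, for every $t$,
\[
\Pr(T_{n,p_1} > t) \le \Pr(T_{n,p_2} > t),
\]
which is exactly $T_{n,p_1} \preceq T_{n,p_2}$. This also covers the case where the hitting times are infinite.

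\emph{The coupling.} Both chains use the same sequence of acquired coupons $(C_t)_{t\ge0}$, drawn i.i.d.\ uniform on $[n]$. For the losses, draw independent uniforms $U_{i,t}\in[0,1]$ for $i\in[n]$ and $t\ge0$, independent of the $C_t$. Set
\[
L^{(p_j)}_{i,t} := \mathbf{1}\{U_{i,t} \le p_j\}, \qquad j\in\{1,2\}.
\]
For each $j$, the variables $L^{(p_j)}_{i,t}$ are i.i.d.\ Bernoulli$(p_j)$ and independent of $(C_t)$. Hence each marginal process
\[
S^{(p_j)}_{t+1} = \bigl(S^{(p_j)}_t\cup\{C_t\}\bigr)\setminus\{i : L^{(p_j)}_{i,t}=1\}, \qquad S^{(p_j)}_0=\emptyset,
\]
has exactly the law of CCCP with parameter $p_j$. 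Because $p_1\le p_2$, we have $L^{(p_1)}_{i,t}\le L^{(p_2)}_{i,t}$ for all $i$ and $t$. So the set removed in the $p_1$-chain is contained in the set removed in the $p_2$-chain.

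\emph{Induction.} At $t=0$ both sets are empty. Suppose $S_t^{(p_1)}\supseteq S_t^{(p_2)}$. Then
\[
S_t^{(p_1)}\cup\{C_t\}\supseteq S_t^{(p_2)}\cup\{C_t\}.
\]
Removing a smaller set from a larger set leaves a superset of what remains after removing a larger set from a smaller one. Therefore $S_{t+1}^{(p_1)}\supseteq S_{t+1}^{(p_2)}$.

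The only point needing care is checking that the coupled marginals have the correct CCCP laws, which follows from the independence structure described above. No step here is a real obstacle.
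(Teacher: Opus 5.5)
Your proof is correct and follows the paper's argument essentially verbatim: the same monotone coupling via shared draws $C_t$ and thresholded uniforms $U_{i,t}$, an induction giving $S_t^{(p_1)} \supseteq S_t^{(p_2)}$, and a pointwise comparison of the hitting times. You additionally check that the coupled marginals have the correct CCCP laws and spell out the tail inequality $\Pr(T_{n,p_1} > t) \le \Pr(T_{n,p_2} > t)$; the paper leaves both implicit.
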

\begin{proof}
We prove the statement by a monotone coupling argument.
We use the same sampled coupon $C_t \in [n]$ for both chains at each time $t$.
For each coupon $i \in [n]$ and time $t \ge 0$, generate a uniform random variable $U_{i,t} \sim \text{Uniform}[0,1]$, and set
\[
	L_{i,t}^{(p_1)} = \mathbf{1}\{ U_{i,t} \le p_1 \}
	\qquad\text{and}\qquad
	L_{i,t}^{(p_2)} = \mathbf{1}\{ U_{i,t} \le p_2 \},
\]
where $\mathbf{1}\{E\}$ is the indicator random variable of the event $E$.
Since $p_1\le p_2$, then $L_{i,t}^{(p_1)} \le L_{i,t}^{(p_2)}$ for all $i,t$.
By induction on $t$, for all $t \ge 0$ we have that $S_t^{(p_1)} \supseteq S_t^{(p_2)}$.
Indeed, $S_0 = \emptyset$ in CCCP and adding the same drawn coupon $C_t$ and removing fewer lost coupons for $p_1$ preserves the inclusion.
Since $T_{n,p} := \min\{ t\ge 0 : S_t = [n] \}$, it follows immediately that $T_{n,p_1} \preceq T_{n,p_2}$.
\end{proof}

We now state some other observations about $T_{n,p}$ for $p>0$. 

\begin{observation}
For $p=1$, CCCP never ends since in each time step a coupon is collected and also lost with probability 1. Hence the process never reaches $[n]$ and we have
\[
    T_{n,1} = \infty.
\]
\end{observation}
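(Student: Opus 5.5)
The claim is that for $p=1$ the chain never reaches $[n]$, so $T_{n,1}=\infty$. I will show directly from the definition of the dynamics that $S_t=\emptyset$ for every $t\ge 0$, surely and not just almost surely. This immediately rules out $S_t=[n]$ for $n\ge 1$.

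The argument is an induction on $t$. The base case is $S_0=\emptyset$, which holds by definition. For the inductive step, note that when $p=1$ each indicator $L_{i,t}$ is Bernoulli with parameter $1$. Hence $L_{i,t}=1$ for all $i\in[n]$ and all $t$. So at every round the removed set $\{\,i : L_{i,t}=1\,\}$ is all of $[n]$, and
\[
    S_{t+1} = (S_t \cup \{C_t\}) \setminus [n] = \emptyset,
\]
whatever $S_t$ and the drawn coupon $C_t$ are. Strictly speaking, this shows $S_{t+1}=\emptyset$ for every $t\ge 0$ without using the inductive hypothesis, and together with $S_0=\emptyset$ it covers all $t$.

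Since $n\ge 1$, we have $[n]\neq\emptyset$, so the set $\{\,t\ge 0 : S_t=[n]\,\}$ is empty. By the convention stated after the definition of $T_{n,p}$, this means $T_{n,1}=\infty$, and in particular $\E[T_{n,1}]=\infty$.

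There is no real obstacle. The only point needing care is that Bernoulli$(1)$ variables equal $1$ with probability one. If one wants the statement for every sample path rather than almost surely, one can realize $L_{i,t}$ as in the coupling of \cref{prop: monotone coupling}, namely $L_{i,t}=\mathbf{1}\{U_{i,t}\le 1\}$, which is identically $1$. One can also note, consistently with \cref{prop: monotone coupling}, that $S^{(1)}_t\subseteq S^{(p)}_t$ for every $p$, so the $p=1$ chain is the extreme case of the monotone family.
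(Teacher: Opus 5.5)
Your proof is correct and takes the same approach as the paper: with $p=1$ every held coupon, including the one just drawn, is removed each round, so $S_t=\emptyset$ for all $t$ and $[n]$ is never reached. The induction is unnecessary, as you note yourself. Your opening claim of ``surely'' holds only under the coupling realization $L_{i,t}=\mathbf{1}\{U_{i,t}\le 1\}$; otherwise the conclusion is almost sure, which is all the observation needs.
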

\begin{observation}
For $p<1$, CCCP eventually completes the collection with probability 1, hence
\[
    T_{n,p} < \infty.
\]
This holds because, from any state $S$, there is a positive probability of reaching the target state $[n]$ in a finite number of steps.
Indeed there is a probability $\big(1-\frac{|S|}{n}\big)(1-p)^{|S|+1}$ of acquiring a new coupon without losing any existing ones, making a step forward towards $[n]$.
Since this probability is strictly positive $T_{n,p}$ is finite.
\end{observation}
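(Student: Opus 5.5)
The plan is to show that, when $p<1$, the hitting time is dominated by a geometric random variable on blocks of $n$ consecutive rounds. This gives $T_{n,p}<\infty$ almost surely, and in fact $\E[T_{n,p}]<\infty$. The chain $(S_t)$ lives on the finite state space $2^{[n]}$, so it suffices to find a single constant $\delta=\delta(n,p)>0$ such that, from any state $S$, the chain reaches $[n]$ within $n$ rounds with probability at least $\delta$.

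\textbf{Step 1: a good path of length $n-|S|$.} Fix an arbitrary state $S$. Consider the event that, over the next $n-|S|\le n$ rounds:
\begin{itemize}
\item each drawn coupon $C_t$ is a type not yet in the current collection;
\item no coupon currently held, including the newly drawn one, is lost.
\end{itemize}
At a state of size $k$, this one-step event has probability $\bigl(1-\frac{k}{n}\bigr)(1-p)^{k+1}$, exactly as noted in the observation. By the Markov property, the probability of the whole path is the product over $k=|S|,\dots,n-1$. Each factor is at least $\frac1n(1-p)^{n}$, so the product is at least $\delta:=\bigl(\frac{(1-p)^n}{n}\bigr)^{n}$. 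This is strictly positive because $p<1$.

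\textbf{Step 2: completing the path within the block.} If $|S|<n$, the path above ends exactly at $[n]$, so $T_{n,p}$ occurs within the block. The path may finish before the block's $n$ rounds are over, but then the hitting time has already happened and the remaining rounds do not matter.

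\textbf{Step 3: geometric domination.} Partition time into consecutive blocks of $n$ rounds. By the Markov property at the start of each block, and because the bound $\delta$ is uniform in $S$, we get
\[
\Pr(T_{n,p}>mn)\le(1-\delta)^m .
\]
Letting $m\to\infty$ gives $\Pr(T_{n,p}=\infty)=0$. Summing the tail bound also yields $\E[T_{n,p}]\le n/\delta<\infty$.

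\textbf{Main obstacle.} The only real subtlety is making the lower bound $\delta$ uniform over starting states. Step 1 handles this by bounding every factor by the worst case $\frac1n(1-p)^n$. One also has to remember that the freshly drawn coupon itself may be lost, which is why the exponent in each factor is $k+1$ rather than $k$.
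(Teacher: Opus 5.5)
Your proof is correct and uses the same idea as the paper: chain together the one-step event of drawing a missing coupon and losing nothing, which has probability $\bigl(1-\frac{k}{n}\bigr)(1-p)^{k+1}>0$. You also make explicit two steps the paper leaves implicit. These are the bound $\delta$ that is uniform over the finite state space, and the resulting geometric domination over blocks of $n$ rounds. Together they turn ``positive probability from every state'' into $\Pr(T_{n,p}<\infty)=1$, and they additionally give $\E[T_{n,p}]\le n/\delta<\infty$.
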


\subsection{Marginal probabilities}
\label{sec:marginalprobability}

We analyze the probability of having a specific coupon in the collection at time $t$.
For each fixed coupon type $i$, let
\begin{equation}\label{eq:qt}
	q_t = q_{i,t} := \Pr(i \in S_t).
\end{equation}
Note that by symmetry $q_{i,t} = q_t$ for every $i$, hence we drop the $i$ in the notation.

\begin{lemma}\label{lemma: recurrence}
Let
\[
    a := (1-p)\Big(1-\frac{1}{n}\Big)
    \qquad\text{and}\qquad
    b := \frac{1-p}{n}.
\]
For every $t$, it holds that
\begin{equation}\label{eq: recurrence qt}
	q_{t+1} = a q_t + b.
\end{equation}
\end{lemma}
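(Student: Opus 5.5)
We will prove the recurrence by conditioning on whether coupon $i$ is already in the collection at time $t$, following the update rule $S_{t+1} = (S_t \cup \{C_t\}) \setminus \{ j : L_{j,t}=1\}$. Coupon $i$ belongs to $S_{t+1}$ exactly when two conditions hold. First, $i \in S_t \cup \{C_t\}$. Second, $L_{i,t}=0$. We will write
\[
\{i \in S_{t+1}\} = \{i \in S_t \cup \{C_t\}\} \cap \{L_{i,t} = 0\}.
\]
The first step is to note that $L_{i,t}$ is independent of both $S_t$ and $C_t$. This holds because $S_t$ is a function only of $C_0,\dots,C_{t-1}$ and $\{L_{j,s}\}_{s<t}$, and all these variables are independent of the round-$t$ variables. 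This gives
\[
q_{t+1} = (1-p)\Pr(i \in S_t \cup \{C_t\}).
\]

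The second step is to compute $\Pr(i \in S_t \cup \{C_t\})$ by splitting into the disjoint events $\{i\in S_t\}$ and $\{i\notin S_t, C_t = i\}$. Since $C_t$ is uniform on $[n]$ and independent of $S_t$, the second event has probability $(1-q_t)\cdot\frac1n$. Hence
\[
\Pr(i \in S_t \cup \{C_t\}) = q_t + \frac{1-q_t}{n} = q_t\Big(1-\frac1n\Big) + \frac1n.
\]
Multiplying by $(1-p)$ gives $q_{t+1} = (1-p)(1-\frac1n)q_t + \frac{1-p}{n} = a q_t + b$, as claimed.

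There is no real obstacle in this argument. The only point that needs care is the independence claim: the round-$t$ randomness $(C_t, \{L_{j,t}\}_j)$ must be independent of $S_t$, which follows from the construction of the chain with fresh independent draws in each round. Note also that the recurrence depends only on the marginal $q_t$. Correlations between different coupons never enter, because the loss of coupon $i$ is decided by its own indicator $L_{i,t}$, and whether $i$ is acquired depends only on $C_t$. The symmetry noted after \eqref{eq:qt} is only used to drop the index $i$.
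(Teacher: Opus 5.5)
Your proof is correct and takes essentially the same approach as the paper, which also conditions on whether $i\in S_t$ and reads off $\Pr(i\in S_{t+1}\cond i\in S_t)=1-p$ and $\Pr(i\in S_{t+1}\cond i\notin S_t)=\frac{1-p}{n}$. Where the paper leaves these conditional probabilities implicit, you justify them by stating that the round-$t$ randomness $(C_t,\{L_{j,t}\}_j)$ is independent of $S_t$.
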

\begin{proof}
By the law of total probability, we have
\begin{align*}
q_{t+1} &= \Pr(i \in S_{t+1}) \\
&= \Pr(i \in S_{t+1} \land i \in S_t) + \Pr(i \in S_{t+1} \land i \not\in S_t) \\
&= \Pr(i \in S_{t+1} \cond i \in S_t) \cdot \Pr(i \in S_t)
    + \Pr(i \in S_{t+1} \cond i \not\in S_t) \cdot \Pr(i \not\in S_t) \\
&= (1-p) q_t + \frac{(1-p)}{n} (1-q_t).
\end{align*}
By definition $a := (1-p)\big(1-\frac{1}{n}\big)$ and $b := \frac{1-p}{n}$. Thus, we conclude that $q_{t+1} = aq_t + b$.
\end{proof}

\begin{lemma}\label{lemma: fixed point}
The recurrence in \cref{eq: recurrence qt} converges to a unique fixed point
\begin{equation}\label{eq: fixed point q*}
	q_* = \lim_{t \rightarrow \infty } q_t = \frac{1-p}{1-p+np}.
\end{equation}
\end{lemma}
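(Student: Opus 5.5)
The plan is to treat \cref{eq: recurrence qt} as an affine contraction and solve it in closed form. The fixed point has to satisfy $q = aq + b$, which gives $q_* = b/(1-a)$. This is well defined whenever $a < 1$, and that holds unless $p=0$ and $n=1$. I will treat the case $p = 0$ separately in a moment.

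For the closed form, I will compute
\[
1-a = 1-(1-p)\Bigl(1-\frac1n\Bigr) = \frac{n - (1-p)(n-1)}{n} = \frac{1-p+np}{n}.
\]
Dividing $b = (1-p)/n$ by this yields exactly $q_* = \frac{1-p}{1-p+np}$, as claimed in \cref{eq: fixed point q*}. Uniqueness is immediate because the map $x \mapsto ax+b$ with $a\neq 1$ has exactly one fixed point.

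For convergence, I will subtract the fixed-point identity $q_* = a q_* + b$ from \cref{eq: recurrence qt}. This gives $q_{t+1}-q_* = a(q_t - q_*)$. Unrolling from $q_0 = \Pr(i\in S_0)=0$, since $S_0=\emptyset$, gives
\[
q_t = q_*(1-a^t).
\]
Because $0 \le a = (1-p)(1-1/n) < 1$ whenever $n\ge 2$ or $p>0$, we get $a^t\to 0$ and hence $q_t\to q_*$.

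The only point requiring care is the degenerate boundary. When $n=1$ and $p=0$, we have $a=0$ as well, so the argument still works. In fact $a<1$ holds for every $n \ge 1$ except in the meaningless case $n=1$, $p=0$ with $1-1/n=0$; even there $a=0<1$. So $a<1$ always holds, and I would simply note that $a\in[0,1)$ for all admissible $n,p$. The case $p=1$ gives $a=b=0$ and $q_*=0$, which is consistent with the formula. No genuine obstacle is expected.
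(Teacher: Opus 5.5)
Your proposal is correct and follows essentially the paper's route: the paper cites the Banach fixed-point theorem for the contraction $x \mapsto ax+b$ and then solves $q_* = b/(1-a)$, while you replace that citation with the explicit unrolling $q_t - q_* = a^t(q_0 - q_*)$, which is exactly what the paper proves next in \cref{lemma: monotonicity}. Your opening remark that $a<1$ fails when $p=0$, $n=1$ is wrong, as you yourself go on to note: $a \le 1-\frac{1}{n} < 1$ for every $n \ge 1$, so no case split is needed.
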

\begin{proof}
Let $f(x)=ax+b$ be the mapping defining the recurrence in \cref{eq: recurrence qt}.
Since $|a|=(1-p)(1-\frac{1}{n})<1$, for every $x,y$ we have $|f(x)-f(y)| < |x-y|$, i.e., $f$ is a contraction mapping.
By the Banach fixed-point theorem, $f$ admits a unique fixed point $q_*$.
Consequently, $\lim_{t \rightarrow \infty } q_t = q_*$ regardless of the initial value $q_0$.
Solving the fixed-point equation $q_{*}=aq_{*}+b$ yields $q_* = \frac{b}{1-a} = \frac{1-p}{1-p+np}$.
\end{proof}

The convergence of $q_t$ to the unique fixed point $q_*$ can be understood as the system reaching a \textit{stochastic steady state}. This state represents the point where the two fundamental forces of CCCP, i.e., coupon acquisition and independent loss, reach an equilibrium. When the size of the collection is less than $nq_*$ the acquisition rate is higher than the cumulative loss rate, and viceversa. The fixed point $q_*$ thus serves as an attractor where the expected net change in the collection size is zero, defining a level around which the process will fluctuate for the majority of its duration, as we will see in the next section, before eventually hitting the completion state $[n]$.

\begin{corollary}\label{coro: metastable regimes}
Recall the definition of $q_*$ in \cref{eq: fixed point q*}.
It holds that
\[
q_* = \begin{cases}
    1-\Theta(np) & \text{if } p=o\big(\frac{1}{n}\big),
    \\
    \frac{1}{1+c} - o(1) & \text{if } p=\frac{c}{n}, \text{ for constant $c>0$},
    \\
    \Theta\big(\frac{1}{np}\big) & \text{if } p=\omega\big(\frac{1}{n}\big).
\end{cases}
\]
\end{corollary}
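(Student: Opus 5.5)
The plan is to work directly from the closed form in \cref{lemma: fixed point}, $q_* = \frac{1-p}{1-p+np}$, and expand it asymptotically in each of the three regimes. It is convenient to rewrite
\[
    q_* = \frac{1}{1+\frac{np}{1-p}} \qquad\text{and}\qquad 1-q_* = \frac{np}{1-p+np}.
\]
The whole statement then reduces to understanding the single quantity $x := \frac{np}{1-p}$, since $q_* = \frac{1}{1+x}$.

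\textbf{Case $p=o(1/n)$.} Here $p = o(1)$ as well, so $1-p = 1-o(1)$, and moreover $np=o(1)$. Using the second form, $1-q_* = \frac{np}{1-p+np}$. The denominator is $1-o(1)$, so $1-q_* = np(1+o(1)) = \Theta(np)$. This gives $q_* = 1-\Theta(np)$. The degenerate case $p=0$ gives exactly $q_*=1$, consistent with $\Theta(0)$.

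\textbf{Case $p=c/n$ with $c>0$ constant.} Here $np=c$ and $1-p = 1-c/n$. Substituting,
\[
    q_* = \frac{1-c/n}{1-c/n+c}.
\]
This tends to $\frac{1}{1+c}$. To see the sign of the error, compute
\[
    \frac{1}{1+c}-q_* = \frac{c^2/n}{(1+c)(1+c-c/n)} = \Theta(1/n),
\]
which is positive and $o(1)$. So $q_* = \frac{1}{1+c}-o(1)$, with the correction in fact $\Theta(1/n)$.

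\textbf{Case $p=\omega(1/n)$.} This is the only case needing some care, because $p$ may approach $1$, and I expect it to be the main (mild) obstacle. Here $np\to\infty$, and since $1-p\le 1$ we have $x = \frac{np}{1-p}\ge np\to\infty$. Hence $q_* = \frac{1}{1+x} = \Theta(1/x) = \Theta\big(\frac{1-p}{np}\big)$.
\begin{itemize}
    \item If $p$ is bounded away from $1$ (say $p\le 1/2$, or more generally $1-p=\Theta(1)$), this is exactly $\Theta\big(\frac{1}{np}\big)$, matching the statement.
    \item If $p\to 1$, the factor $1-p$ is not absorbed by the constant. The honest form of the bound is then $q_*=\Theta\big(\frac{1-p}{np}\big)$, with $\Theta\big(\frac1{np}\big)$ holding as an upper bound.
\end{itemize}
I will therefore state the third case under the implicit assumption $1-p=\Omega(1)$, and note the refined expression $\Theta\big(\frac{1-p}{np}\big)$ for general $p<1$. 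This refined form is consistent with the $\big(\frac{np}{1-p}\big)^n$ scaling that appears in the abstract.
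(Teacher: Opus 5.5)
Your proof is correct and follows essentially the same route as the paper: direct asymptotic manipulation of the closed form $q_*=\frac{1-p}{1-p+np}$ in each regime. Your use of $1-q_*=\frac{np}{1-p+np}$ in the first case is cleaner than the paper's Taylor-expansion step. Your caveat in the third case is justified: the paper writes $q_*=\frac{1}{np}\cdot\frac{1-p}{1+\frac{1-p}{np}}$ and silently treats $1-p$ as $\Theta(1)$. So the stated $\Theta\big(\frac{1}{np}\big)$ genuinely requires $1-p=\Omega(1)$, while in general $q_*=\Theta\big(\frac{1-p}{np}\big)$ (e.g.\ $p=1-\frac1n$ gives $q_*\approx n^{-2}$).
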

\begin{proof} 
We show the proof for each of the regimes.
\begin{enumerate}
\item Consider the case when $p=o(\frac{1}{n})$. In this case, we use Taylor's expansion $\frac{1}{1+x}=1-x+O(x^2)$ where $x=\frac{1}{1+(np-p)}$. Using routine algebra, we can simplify to get the desired bound. 
\item When $p=\frac{c}{n}$, we can simplify the expression for $q_*$ to obtain, $q_{*}=\frac{1-p}{1+c-p}$. This can be further rewritten as $q_*=\frac{1}{1+c}-o(1)$.
\item Consider the case when $p=\omega(\frac{1}{n})$. Then,
$ q_*=(\frac{1}{np})\cdot \frac{1-p}{1+\frac{1-p}{np}}$. In this regime, we can rewrite $np=o(1)$. Thus, $q_*=\Theta(\frac{1}{np})$.
\qedhere
\end{enumerate}
\end{proof}

\begin{lemma}\label{lemma: monotonicity}
Consider the recurrence in \cref{eq: recurrence qt} with initial condition $q_0$ and fixed point $q_*$ as defined in \cref{eq: fixed point q*}.
It holds that
\[
    q_{t} - q_* = a^t (q_0 - q_*).
\]
As a corollary:
\begin{itemize}
    \item If $q_0 < q_*$, then $q_t$ is monotonically increasing.
    \item If $q_0 > q_*$, then $q_t$ is monotonically decreasing.
\end{itemize}
\end{lemma}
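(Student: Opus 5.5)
The plan is to subtract the fixed-point equation from the recurrence and then iterate. By \cref{lemma: fixed point}, $q_*$ satisfies $q_* = a q_* + b$. Subtracting this from \cref{eq: recurrence qt} gives
\[
    q_{t+1} - q_* = a q_t + b - (a q_* + b) = a\,(q_t - q_*)
\]
for every $t \ge 0$, because the additive constant $b$ cancels. A straightforward induction on $t$ then yields $q_t - q_* = a^t (q_0 - q_*)$. The base case $t=0$ is trivial, and the inductive step is exactly the displayed identity.

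For the monotonicity corollary, I will look at consecutive differences:
\[
    q_{t+1} - q_t = (q_{t+1} - q_*) - (q_t - q_*) = (a^{t+1} - a^t)(q_0 - q_*) = a^t (a-1)(q_0 - q_*).
\]
Here $a = (1-p)(1-\frac{1}{n})$, which satisfies $0 \le a < 1$ for $n \ge 1$ and $p\in[0,1]$. Hence $a - 1 < 0$ and $a^t \ge 0$.

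It follows that if $q_0 < q_*$, then $q_{t+1} - q_t \ge 0$, so the sequence is increasing. If $q_0 > q_*$, then the difference is $\le 0$, so the sequence is decreasing. The inequality is strict whenever $a > 0$, i.e.\ $p<1$ and $n\ge 2$. In the degenerate case $a=0$, the sequence is constant equal to $q_*$ from $t=1$ on, which is still monotone in the weak sense.

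The only delicate point is this degenerate case and the sign of $a$; I would state the monotonicity as non-strict, or note that it is strict whenever $a>0$. Otherwise the argument is purely algebraic and presents no real obstacle.
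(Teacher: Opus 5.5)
Your proof is correct and follows the paper's route: subtract the fixed-point identity $q_* = a q_* + b$ from the recurrence and iterate to get $q_t - q_* = a^t(q_0 - q_*)$. The paper leaves the monotonicity corollary implicit; your difference formula $q_{t+1} - q_t = a^t(a-1)(q_0 - q_*)$ supplies it, and your remark that monotonicity is only weak when $a=0$ (i.e.\ $p=1$ or $n=1$) is a correct refinement.
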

\begin{proof}
Since $q_*$ is a fixed point, it holds that $q_* = a q_* + b$.
Therefore
\begin{align*}
q_{t} - q_* &=a q_{t-1} + b - (a q_* + b) \\
&= a (q_{t-1} - q_*) \\
&\; \; \vdots \\
&= a^t (q_0 - q_*).
\qedhere
\end{align*}
\end{proof}

\cref{lemma: fixed point} states that $q_t$ converges to $q_*$ and \cref{lemma: monotonicity} provides its convergence rate.
We also define the \textit{marginal mixing time} as the first time such that $q_t$ is sufficiently close to $q_*$.

\begin{definition}[Marginal mixing time]\label{def: T_mix marginal}
For any $\epsilon \in (0,1)$, let
\begin{equation}\label{eq: T_mix marginal}
    \Tmar
    = \Tmar(\epsilon) 
    := \min\left\{\, t\ge 0: \forall s\ge t,\ \big|q_s -q_*\big|\le\epsilon \,\right\}.
\end{equation}
\end{definition}
Note that the marginal mixing time differs from the standard notion of mixing time \cite{levin2017markov} since it only measures convergence of the marginals rather than of the entire state distribution.

\begin{lemma}\label{lem:boundonTmix}
For any $\epsilon \in (0,1)$, it holds that
\[
    \Tmar(\epsilon)
    = \frac{\ln(q_*/\epsilon)}{\ln(1/a)} 
    = \Theta\left(\frac{\ln(q_*/\epsilon)}{p+1/n}\right).
\]
\end{lemma}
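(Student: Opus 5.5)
The plan is to apply \cref{lemma: monotonicity} with the initial condition $q_0=0$, which holds because $S_0=\emptyset$. This gives the explicit formula
\[
q_t-q_* = -a^t q_*, \qquad\text{that is,}\qquad |q_t-q_*| = a^t q_* .
\]
Since $0\le a<1$ whenever $p<1$, the sequence $a^t q_*$ is nonincreasing in $t$. So once $|q_t-q_*|\le\epsilon$ holds at some time $t$, it holds at every later time $s\ge t$. The quantifier ``for all $s\ge t$'' in \cref{def: T_mix marginal} is therefore automatic, and $\Tmar(\epsilon)$ is simply the first $t$ with $a^t q_*\le \epsilon$.

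Solving $a^t q_*\le\epsilon$ gives $t\ge \ln(q_*/\epsilon)/\ln(1/a)$, so the exact value is the ceiling of this quantity, or $0$ when $q_*\le\epsilon$. I would state this explicitly and read the equality in the lemma up to rounding. This rounding is harmless for the $\Theta$ statement whenever the ratio is at least a constant; in the degenerate case where $q_*/\epsilon$ is close to $1$, I would add a remark noting the ceiling, or interpret the formula as $\max\{0,\cdot\}$.

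For the asymptotic form, the remaining task is to show $\ln(1/a)=\Theta(p+1/n)$, where
\[
\ln(1/a) = -\ln(1-p) - \ln(1-1/n).
\]
I would use the elementary inequalities $x\le -\ln(1-x)\le x/(1-x)$ for $x\in[0,1)$. Applied to $x=1/n$, they give $-\ln(1-1/n)=\Theta(1/n)$ for $n\ge 2$. Applied to $x=p$, they give $-\ln(1-p)=\Theta(p)$ provided $p$ is bounded away from $1$. Adding the two estimates yields $\ln(1/a)=\Theta(p+1/n)$.

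The only real obstacle is the edge regime $p\to1$. There $-\ln(1-p)$ is much larger than $p$, so the $\Theta$ claim can fail from above. I would therefore state the asymptotic part under the assumption $p\le 1-\delta$ for a constant $\delta>0$, which covers all the regimes considered in \cref{coro: metastable regimes}. Near $p=1$ the exact formula $\ln(q_*/\epsilon)/\ln(1/a)$ still holds and gives an even smaller mixing time.
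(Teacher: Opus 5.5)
Your proposal is correct and follows essentially the same route as the paper: apply \cref{lemma: monotonicity} with $q_0=0$ to get $|q_t-q_*|=a^t q_*$, solve $a^t q_*\le\epsilon$ for $t$, and then show $\ln(1/a)=\Theta(p+1/n)$. Your version is more careful in two respects. First, the paper states the equality without the ceiling. Second, it uses the expansion $\ln(1/(1-x))=x(1+o(1))$, which tacitly requires $p=o(1)$, whereas your two-sided bounds $x\le -\ln(1-x)\le x/(1-x)$ give the $\Theta$ claim for all $p\le 1-\delta$ and correctly flag its failure as $p\to 1$.
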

\begin{proof}
By \cref{lemma: monotonicity}, for any arbitrary small $\epsilon > 0$, we have that
\[
    |q_t - q_*| = a^t q_* \le \epsilon 
    \iff t \ge \frac{\ln(q_*/\epsilon)}{\ln(1/a)} = \Tmar.
\]
Using the Taylor expansion $\ln\big(\frac{1}{1-x}\big) = x(1+o(1))$, since $a := (1-p)(1-1/n)$ we have
$\ln(1/a) = \left(p+\frac{1}{n}\right)(1+o(1))$. 
Therefore, we conclude that
$\Tmar(\epsilon) = \Theta\big(\frac{\ln(q_*/\epsilon)}{p+1/n}\big)$.
\end{proof}

\section{Metastability}\label{sec:metastability}
For every time $t\ge 0$ and coupon $i\in[n]$, let 
\begin{equation}\label{eq:X_it}
    X_{i,t} := \mathbf{1}\{\,i\in S_t\,\},
\end{equation}
namely $X_{i,t}$ is an indicator random variable of the event $\{\,i \in S_t\,\}$ that is 1 with probability $q_t$, by \cref{eq:qt}.
Hence, it holds that $\E[X_{i,t}] = q_t$.

\begin{lemma}\label{lemma: expectation St}
It holds that
\[
    \E[|S_t|] = n q_t.
\]
\end{lemma}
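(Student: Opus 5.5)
The plan is to write $|S_t|$ as a sum of indicators and then apply linearity of expectation. Using the indicator variables $X_{i,t}$ defined in \cref{eq:X_it}, the size of the collection at time $t$ is
\[
    |S_t| = \sum_{i=1}^{n} X_{i,t},
\]
because each coupon type $i\in[n]$ contributes exactly $1$ to the count if $i\in S_t$ and $0$ otherwise. Taking expectations on both sides and using linearity of expectation, which does not require the $X_{i,t}$ to be independent, gives
\[
    \E[|S_t|] = \sum_{i=1}^{n} \E[X_{i,t}] = \sum_{i=1}^{n} \Pr(i\in S_t).
\]

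The only remaining point is that every summand equals $q_t$. By \cref{eq:qt}, $\Pr(i\in S_t)=q_{i,t}$. By the symmetry noted right after that equation, $q_{i,t}=q_t$ for every $i$. This symmetry holds because the dynamics treats all coupon types identically: $C_t$ is uniform on $[n]$, the losses $L_{i,t}$ are i.i.d., and $S_0=\emptyset$ is invariant under permutations of $[n]$. The sum is therefore $n q_t$, as claimed.

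There is no real obstacle in this argument. The one thing to be careful about is that the indicators $X_{i,t}$ are correlated across $i$: for example, only one coupon can be gained per round. For this reason the argument must rely on linearity of expectation alone, and not on any independence among the $X_{i,t}$.

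As a byproduct, combining the result with \cref{lemma: monotonicity} gives the explicit formula $\E[|S_t|]=n\bigl(q_*+a^t(q_0-q_*)\bigr)$, and since $q_0=0$ this equals $nq_*(1-a^t)$. This form is convenient for the concentration arguments that follow in this section.
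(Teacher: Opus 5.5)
Your proof is correct and is the same as the paper's: write $|S_t|=\sum_{i=1}^n X_{i,t}$, apply linearity of expectation, and use $\E[X_{i,t}]=q_t$ by symmetry. Your closing formula $\E[|S_t|]=nq_*(1-a^t)$ also checks out, since it follows from \cref{lemma: monotonicity} with $q_0=0$.
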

\begin{proof}
For each $t\ge 0$, by linearity of expectation we get
\begin{equation*}
    \E[|S_t|] 
    = \E\left[ \sum_{i=1}^n X_{i,t} \right]
    = \sum_{i=1}^n \E[X_{i,t}]
    = n q_t.
    \qedhere
\end{equation*}
\end{proof}

For each fixed time $t$, the random variables $(X_{i,t})_{i\in[n]}$ are not independent, however they are \textit{negatively associated}, as we prove next.

\begin{definition}[Negative association]\label{def:neg ass rvs}
The random variables $X_1,\dots,X_n$ are said to be negatively associated if for all disjoint subsets $I,J \subseteq \{1,\dots,n\}$ and all non-decreasing functions $f$ and $g$ it holds that
\[
    \E[f(X_i, i\in I) \cdot g(X_j, j\in J)] 
    \le \E[f(X_i, i\in I)] \cdot \E[g(X_j, j\in J)].
\]
\end{definition}

\begin{lemma}\label{lem:X_it neg ass}
For every fixed $t$, it holds that $(X_{i,t})_{i\in[n]}$ are negatively associated.
\end{lemma}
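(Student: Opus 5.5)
The plan is to write each $X_{i,t}$ as a non-decreasing function of a family of basic random variables that is itself negatively associated, with the families used by different coupons pairwise disjoint. The closure properties of negative association then finish the proof. I would rely on the standard facts of Dubhashi and Ranjan (Joag-Dev and Proschan):
(i) a $0$-$1$ vector with exactly one coordinate equal to $1$ is negatively associated (the zero-one lemma);
(ii) the union of mutually independent negatively associated families is negatively associated;
(iii) non-decreasing functions of pairwise disjoint subsets of a negatively associated family are again negatively associated.

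First, fix $t$ and unroll the dynamics. For $s\in\{0,\dots,t-1\}$ and $i\in[n]$, set $B_{s,i}:=\mathbf{1}\{C_s=i\}$ and $K_{i,s}:=1-L_{i,s}$, the indicator that coupon $i$ is kept at the end of round $s$. Since $S_0=\emptyset$, and coupon $i$ can only enter through $C_s=i$ and then must survive every loss step from round $s$ (which includes the round in which it was acquired) through round $t-1$, I will check by induction on $t$ that
\[
X_{i,t}=\max_{0\le s\le t-1}\Big( B_{s,i}\prod_{r=s}^{t-1} K_{i,r}\Big).
\]
This expression is a non-decreasing function of the variables $\{B_{s,i},K_{i,s}\}_{s<t}$. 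Moreover, these index sets are disjoint for distinct $i$.

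Second, I show that the whole family $\{B_{s,i}\}_{s<t,\,i\in[n]}\cup\{K_{i,s}\}_{s<t,\,i\in[n]}$ is negatively associated. For each fixed $s$, the vector $(B_{s,1},\dots,B_{s,n})$ has exactly one coordinate equal to $1$, so it is negatively associated by (i). These vectors are independent across $s$, because the $C_s$ are independent. Each $K_{i,s}$, taken alone, is trivially a negatively associated family. Finally, all $C$'s and $L$'s are mutually independent, so (ii) gives negative association of the union.

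Third, applying (iii) with the disjoint index sets $I_i=\{(s,i)\}_{s<t}$, together with the $K_{i,\cdot}$, and the non-decreasing maps above, yields that $(X_{i,t})_{i\in[n]}$ are negatively associated.

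I expect the main subtlety to be the monotonicity. Written in terms of the loss indicators $L_{i,r}$, the variable $X_{i,t}$ is decreasing, which is why I pass to the keep indicators $K_{i,r}$; this is harmless because they are independent of the $B$'s and of each other. One also has to state the unrolled representation carefully, so that every variable a given $X_{i,t}$ depends on is tagged by $i$ and no shared randomness leaks between coupons. The sampling variables are shared through $C_s$, but that dependence is handled exactly by the zero-one lemma.
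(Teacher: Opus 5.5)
Your proof is correct. It follows the paper's overall strategy: write $X_{i,t}$ as an OR over ``drawn at round $s$ and kept through round $t-1$'', then use the zero-one lemma, the independent-union property and monotone aggregation. However, your decomposition differs from the paper's at the one point where the paper's argument fails.

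The paper first groups the keep indicators into $R_{i,s}=\prod_{r=s}^{t-1}\mathbf{1}\{L_{i,r}=0\}$ and sets $Y_{i,s}=Z_{i,s}R_{i,s}$. It then asserts that the blocks $(Z_{\cdot,s},R_{\cdot,s})$ are independent across $s$. That is false, because $R_{i,s}$ and $R_{i,s'}$ share every factor with $r\ge\max(s,s')$. Concretely, for $s<s'<t$,
\[
\E[Y_{i,s}Y_{i,s'}]=\tfrac{1}{n^2}(1-p)^{t-s}>\tfrac{1}{n^2}(1-p)^{t-s}(1-p)^{t-s'}=\E[Y_{i,s}]\,\E[Y_{i,s'}]
\]
whenever $0<p<1$. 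So the family $\{Y_{i,s}\}_{i,s}$ that the paper aggregates from is not negatively associated, and neither is the family $\{(Z_{i,s},R_{i,s})\}$.

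You apply the independent-union property only to the primitive variables: the $t$ one-hot vectors $(B_{s,1},\dots,B_{s,n})$ and the $nt$ singletons $K_{i,r}$, which really are mutually independent. You then aggregate once, directly to $X_{i,t}$. The positive dependence among survival events of the same coupon therefore stays inside the single block $\{B_{s,i},K_{i,s}\}_{s<t}$ belonging to coupon $i$, where monotone aggregation does not need independence. Your route is the rigorous form of what the paper intends; grouping into $R_{i,s}$ saves only notation and makes the argument invalid.
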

\begin{proof}
Recall that $S_0 = \emptyset$ and $S_t$ depends only on $S_{t-1}$, on the sampled coupon $C_{t-1}$ and indicators $\{L_{i,t-1}\}_{i \in [n]}$.
In particular, whether coupon $i$ belongs to $S_t$ depends only on coupons drawn at times $s\in\{0,\dots,t-1\}$ and whether those tokens survived the sequence of losses after up to time $t-1$.
Formally, for each time $s\in\{0,\dots,t-1\}$ and coupon $i\in[n]$ define
\[
	Y_{i,s} := Z_{i,s} \cdot R_{i,s},
\]
where 
\[
    Z_{i,s} := \mathbf{1}\{C_s=i\}
    \qquad\text{and}\qquad
    R_{i,s} := \prod_{r=s}^{t-1}\mathbf{1}\{L_{i,r}=0\}
\]
respectively indicate that coupon $i$ was sampled at time $s$, and that specific token survived every subsequent loss event up to time $t-1$.
With this definition, it holds that
\[
	X_{i,t} = \mathbf{1}\{\,i\in S_t\,\}
	= \bigvee_{s=0}^{t-1} Y_{i,s},
\]
namely coupon $i$ is in $S_t$ if and only if there exists at least one draw before time $t$ that sampled coupon $i$ and coupon $i$ survived all later losses up to $t-1$.

We prove negative association by starting from a family of negatively associated random variables and leveraging closure properties of negatively associated random variables \cite[Chapter~3.1]{dubhashi2009concentration}.
For each fixed $s$, the family of indicators $Z_{i,s}$ are such that exactly one variable is 1 and the rest are 0 and therefore is negatively associated.
For each fixed $s$, the $R_{i,s}$ are independent across $i$ and independent of the $Z_{i,s}$.
Combining a negatively associated family with an independent family preserves negative association.
Moreover the families $(Z_{i,s}, R_{i,s})$ are independent across different $s$.
Similarly, concatenating two independent families, each negatively associated, preserves negative association.
Finally, each $X_{i,t}$ is a coordinate-wise non-decreasing function of a disjoint block of $Y_{i,s}$'s.
Therefore, by the monotone aggregation property, the family $(X_{i,t})_{i}$ is negatively associated.
\end{proof}

Given the negative dependence, we can apply the following version of the Chernoff bound to prove that the expected fraction of collected coupons stays concentrated around $q_*$ for an exponential number of rounds.

\begin{theorem}[Multiplicative Chernoff bound, \cite{dubhashi2009concentration}]\label{thm:chernoff bound}
Let $X_1,\dots,X_n$ be independent or negatively associated indicator random variables. 
Let $X := \sum_{i=1}^{n} X_i$ and let $\mu = \E[X]$.
For all $\delta \in (0,1)$ it holds that:
\begin{equation*}
    \Pr(|X - \mu| > \delta\mu) \le 2e^{-\frac{\delta^2}{3}\mu}.
\end{equation*}
\end{theorem}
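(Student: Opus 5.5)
The plan is the standard moment-generating-function (Chernoff) argument. The only new ingredient compared with the independent case is showing that negative association still lets the MGF of $X$ be bounded by the product of the individual MGFs.

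\textbf{Step 1: MGF factorization.} I will prove that for every $\lambda\in\mathbb{R}$,
\[
\E\big[e^{\lambda X}\big]\le\prod_{i=1}^n \E\big[e^{\lambda X_i}\big].
\]
The argument is by induction on $n$.

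For $\lambda>0$, take $I=\{1\}$ and $J=\{2,\dots,n\}$ in \cref{def:neg ass rvs}, with $f(x_1)=e^{\lambda x_1}$ and $g(x_2,\dots,x_n)=e^{\lambda\sum_{j\ge2}x_j}$. Both functions are non-decreasing, so negative association gives $\E[e^{\lambda X}]\le \E[e^{\lambda X_1}]\,\E[e^{\lambda\sum_{j\ge 2}X_j}]$. The subfamily $(X_j)_{j\ge2}$ is still negatively associated, because the definition restricts to subsets, so the induction continues.

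For $\lambda<0$, the functions $f$ and $g$ are both non-increasing. In that case $-f$ and $-g$ are non-decreasing and $(-f)(-g)=fg$, so the same inequality follows.

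This step is the only place where negative association is used. I expect it to be the main point needing care, specifically the sign handling for the lower tail. For independent variables the inequality is an equality.

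\textbf{Step 2: Bernoulli bound.} Write $\pi_i=\E[X_i]$. Since $1+x\le e^x$,
\[
\E\big[e^{\lambda X_i}\big]=1+\pi_i(e^\lambda-1)\le e^{\pi_i(e^\lambda-1)}.
\]
Taking the product over $i$ gives $\E[e^{\lambda X}]\le e^{\mu(e^\lambda-1)}$.

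\textbf{Step 3: Upper tail.} For $\lambda>0$, Markov's inequality gives
\[
\Pr\big(X>(1+\delta)\mu\big)\le e^{\mu(e^\lambda-1)-\lambda(1+\delta)\mu}.
\]
Choosing $\lambda=\ln(1+\delta)$ yields the bound $\big(e^{\delta}/(1+\delta)^{1+\delta}\big)^{\mu}$. It remains to check the elementary inequality $\delta-(1+\delta)\ln(1+\delta)\le-\delta^2/3$ on $(0,1)$. One way is to set $h(\delta)=\delta-(1+\delta)\ln(1+\delta)+\delta^2/3$. Then $h(0)=0$ and $h'(\delta)=-\ln(1+\delta)+2\delta/3$. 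We have $h'(0)=0$, $h'$ is concave, and $h'(1)<0$, so $h'\le0$ on $(0,1)$ and hence $h\le0$ there.

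\textbf{Step 4: Lower tail.} Take $\lambda=\ln(1-\delta)<0$ and apply Markov's inequality to $e^{\lambda X}$. This gives the bound $\big(e^{-\delta}/(1-\delta)^{1-\delta}\big)^{\mu}$. Using $(1-\delta)\ln(1-\delta)\ge-\delta+\delta^2/2$, this is at most $e^{-\delta^2\mu/2}\le e^{-\delta^2\mu/3}$.

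\textbf{Conclusion.} A union bound over the two tails gives $\Pr(|X-\mu|>\delta\mu)\le 2e^{-\delta^2\mu/3}$, which is the statement of \cref{thm:chernoff bound}.
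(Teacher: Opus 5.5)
Your proposal is correct and follows the standard route. The paper gives no proof of its own; it cites Dubhashi--Panconesi, where negative association is handled exactly as you do: bound $\E[e^{\lambda X}]$ by $\prod_i \E[e^{\lambda X_i}]$ (flipping signs for $\lambda<0$), optimize each tail in the usual Chernoff way, and finish with a union bound.

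There is one slip in Step 3. The function $h'(\delta)=-\ln(1+\delta)+2\delta/3$ is \emph{convex}, not concave, since $h'''(\delta)=(1+\delta)^{-2}>0$. It is convexity together with $h'(0)=0$ and $h'(1)<0$ that gives $h'(\delta)\le(1-\delta)h'(0)+\delta\,h'(1)=\delta\,h'(1)<0$ on $(0,1)$. Concavity would only give the reverse chord inequality and would not prove the claim.
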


In particular, with the following proposition we prove the metastable regime for small values, namely when $p=O(1/n)$.

\begin{proposition}[Metastable number of coupons]\label{prop:metastability}
For any $\delta \in (0,1)$ pick a positive $\epsilon \le \delta q_*$ and let $T = \Tmar(\epsilon)$. 
For every $L \in \mathbb{N}$, it holds that
\[
    \Pr\left(\exists t \in \{T,\ldots,T + L-1\} : \big|\, |S_t| - nq_* \,\big| > 2 \delta nq_* \right) 
    \le 2L e^{- \frac{\delta^2(1-\delta)}{3} nq_*}.
\]
As a corollary, if $p = O(1/n)$, there exists $L = e^{\Theta(n)}$ such that
\[
    \Pr\left(\exists t \in \{T,\ldots,T + L-1\} : \big|\, |S_t| - nq_* \,\big| > 2 \delta nq_* \right) = o(1).
\]
\end{proposition}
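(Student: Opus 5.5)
The plan is a union bound over the $L$ times, with each time controlled by the negatively associated Chernoff bound. Fix $t \in \{T,\ldots,T+L-1\}$. By \cref{lemma: expectation St}, $\mu_t := \E[|S_t|] = nq_t$. By \cref{lem:X_it neg ass}, $|S_t| = \sum_i X_{i,t}$ is a sum of negatively associated indicators. Since $t \ge T = \Tmar(\epsilon)$, the definition of marginal mixing time gives $|q_t - q_*| \le \epsilon \le \delta q_*$. Hence $|\mu_t - nq_*| \le \delta n q_*$, and in particular $\mu_t \ge (1-\delta) nq_*$. Also $q_0 = 0 < q_*$, so by \cref{lemma: monotonicity} we have $q_t \le q_*$, which gives the cleaner bound $\mu_t \le nq_*$; we will not actually need it.

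Next comes the triangle inequality. If $\big||S_t| - nq_*\big| > 2\delta nq_*$, then
\[
\big||S_t| - \mu_t\big| > 2\delta nq_* - \delta nq_* = \delta n q_* \ge \delta \mu_t .
\]
This uses $\mu_t \le nq_*$, which follows from monotonicity; alternatively we could use $\mu_t \le (1+\delta)nq_*$ with a slightly adjusted constant. Applying \cref{thm:chernoff bound} with parameter $\delta$, the probability of this event is at most
\[
2e^{-\delta^2\mu_t/3} \le 2e^{-\frac{\delta^2(1-\delta)}{3} nq_*}.
\]
A union bound over the $L$ times then gives the displayed bound $2Le^{-\frac{\delta^2(1-\delta)}{3}nq_*}$.

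For the corollary, observe that when $p = O(1/n)$, \cref{coro: metastable regimes} (or directly $q_* = \frac{1-p}{1-p+np}$ with $np \le C$) gives $q_* \ge \frac{1}{1+C}-o(1) = \Theta(1)$, so $nq_* = \Theta(n)$. Fix $\delta$ as a constant and set $\kappa := \frac{\delta^2(1-\delta)}{3}$. Choose $L = e^{\kappa nq_*/2} = e^{\Theta(n)}$. The bound then becomes $2e^{-\kappa nq_*/2} = e^{-\Theta(n)} = o(1)$.

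The only delicate step is the centering: we must make sure the deviation from $nq_*$ translates into a relative deviation of at least $\delta\mu_t$ around the true mean $\mu_t$. This is exactly where monotonicity from the empty start, which gives $q_t \le q_*$, and the choice $\epsilon \le \delta q_*$ are used. Everything else is routine.
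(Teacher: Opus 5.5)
Your proof is correct and follows the paper's route: split $\bigl||S_t|-nq_*\bigr|$ by the triangle inequality around $nq_t$, bound $n|q_t-q_*|\le n\epsilon\le\delta nq_*$, apply the negatively associated Chernoff bound using $nq_t\ge(1-\delta)nq_*$, and take a union bound over the $L$ times. Your explicit use of $q_t\le q_*$ from \cref{lemma: monotonicity} is exactly the step the paper leaves implicit when it combines the two terms into $2\delta nq_*$; you first say you will not need it and then do use it, so drop that remark.
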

\begin{proof}
Note that, by triangle inequality, we have
\begin{align*}
    \big|\, |S_t| - nq_* \,\big|
    &= \big|\, |S_t| -nq_t + nq_t - nq_* \,\big|
    \\
    &\le \big|\, |S_t| - nq_t \,\big| + \big| nq_t - nq_* \big|
    \\
    &= \big|\, |S_t| - nq_t \,\big| + n \big| q_t - q_* \big|.
\end{align*}

We analyze the two terms in the previous bound separately. 
For the first term, note that, by \cref{def: T_mix marginal}, for every $t \ge T = \Tmar(\epsilon)$ we have that $|q_t - q_*| \le \epsilon$, which implies $q_t \ge q_* - \epsilon \ge (1-\delta) q_*$, since $\epsilon \le \delta q_*$ by assumption.
Hence, since $\E[|S_t|]=nq_t$ by \cref{lemma: expectation St}, for every fixed $t \ge T$ by \cref{thm:chernoff bound} we have
\begin{equation*}
    \Pr\left( \big|\, |S_t| - nq_t \,\big| > \delta nq_t \right)
    \le 2e^{-\frac{\delta^2}{3}nq_t}
    \le 2e^{-\frac{\delta^2(1-\delta)}{3}nq_*}.
\end{equation*}
For the second term, by \cref{def: T_mix marginal}, for every $t \ge T$ deterministically we have
\[
    n \big| q_t - q_* \big| 
    \le n \epsilon
    \le \delta n q_*.
\]
Therefore, for every fixed $t \ge T$ we have
\[
    \Pr( \big|\, |S_t| - nq_* \,\big| > 2\delta nq_*  )
    \le 2 e^{- \frac{\delta^2(1-\delta)}{3} nq_*}.
\]
By taking a union bound over all times $t \in \{T,\ldots,T+L-1\}$ we conclude that
\[
    \Pr( \exists t \in \{T,\ldots,T+L-1\} : 
    \big|\, |S_t| - nq_* \,\big| > 2\delta nq_* )
    \le 2L e^{- \frac{\delta^2(1-\delta)}{3} nq_*}.
\]

Regarding the corollary, by \cref{coro: metastable regimes}, whenever $p=O(1/n)$ we have $q_* = \Omega(1)$. 
Hence the corollary follows by picking, $L = e^{cn}$ for any $c < \delta^2(1-\delta)q_* / 3$.
\end{proof}

When the mean is small, or $p$ is large, we cannot get as strong guarantees as implied by the previous proposition. 
Instead, we show the following bound.

\begin{proposition}\label{prop:metastability large p}
Assume $p = \omega(1/n)$ and let $T = \Tmar(\epsilon)$.
For any $\delta \in (0,1)$ and every $L \in \mathbb{N}$, it holds that 
\[
    \Pr\left(\exists t \in \{T,\ldots,T + L-1\} : |S_t| > \delta n \right) \le L e^{-\frac{\delta^2}{4}n}.
\]
As a corollary, there exists $L = e^{\Theta(n)}$ such that
\[
    \Pr\left(\exists t \in \{T,\ldots,T + L-1\} : |S_t| > \delta n \right) = o(1).
\]
\end{proposition}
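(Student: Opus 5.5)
The plan is to prove a single-time tail bound for each fixed $t \ge T$ and then take a union bound over the $L$ times, as in \cref{prop:metastability}. The difference is that here $\mu_t := \E[|S_t|] = nq_t$ may be $o(n)$, so we do not use the relative-deviation form of \cref{thm:chernoff bound}. Instead we bound the probability of exceeding the absolute threshold $\delta n$. Negative association of $(X_{i,t})_{i\in[n]}$ from \cref{lem:X_it neg ass} lets us use Chernoff--Hoeffding bounds for sums of negatively associated indicators \cite{dubhashi2009concentration}. In particular, the additive Hoeffding bound $\Pr(X \ge \mu + \lambda) \le e^{-2\lambda^2/n}$ holds.

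First we control the mean. The recurrence in \cref{lemma: monotonicity} starts from $q_0 = 0 < q_*$, so $q_t$ increases monotonically and $q_t \le q_*$ for every $t$. This holds without even using $t\ge T$; the assumption $t\ge T$ only matters for the paper's framing. By \cref{coro: metastable regimes}, $p=\omega(1/n)$ gives $q_* = \Theta(1/(np)) = o(1)$. Hence for $n$ large enough we have $\mu_t = nq_t \le nq_* \le \frac{\delta}{2}\cdot n$, and in fact $\mu_t = o(n)$.

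Next we apply the tail bound with $\lambda = \delta n - \mu_t \ge (1-o(1))\delta n$. The Hoeffding form gives
\[
    \Pr(|S_t| > \delta n) \le e^{-2(\delta n - \mu_t)^2/n} \le e^{-2(1-o(1))^2\delta^2 n} \le e^{-\frac{\delta^2}{4}n}
\]
for large $n$, and there is a lot of slack in the constant. Alternatively, the multiplicative Chernoff bound $\Pr(X \ge (1+\eta)\mu) \le e^{-\eta^2\mu/(2+\eta)}$ can be used, with the large-$\eta$ regime handled by monotonicity in $\mu$: we may pad $\mu$ up to $\delta n/2$, since the sum is dominated by one with larger means. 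The Hoeffding route is cleaner, so we use it. A union bound over $t\in\{T,\dots,T+L-1\}$ then yields the factor $L$. For the corollary we take $L = e^{cn}$ with $c<\delta^2/4$, which gives $L e^{-\delta^2 n/4} = e^{-(\delta^2/4 - c)n} = o(1)$.

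The main obstacle is justifying the additive Hoeffding (or large-deviation Chernoff) bound under negative association, since \cref{thm:chernoff bound} as stated only covers $\delta\in(0,1)$ relative to $\mu$. We will invoke the standard fact that negatively associated indicators satisfy $\E[e^{\lambda X}] \le \prod_i \E[e^{\lambda X_i}]$. From this, every Chernoff--Hoeffding bound for independent variables transfers \cite{dubhashi2009concentration}. A secondary point is that the estimate is only asymptotic, valid once $q_*\le\delta/2$. For finitely many small $n$ we can absorb this into the statement's asymptotic nature, or note that the bound is trivial whenever its right-hand side is at least $1$.
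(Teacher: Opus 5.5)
Your proposal is correct and is essentially the paper's proof. Both arguments bound $nq_t \le nq_* = o(n)$ via \cref{lemma: monotonicity} and \cref{coro: metastable regimes}, apply the additive Chernoff--Hoeffding bound for negatively associated indicators (stated in the paper as \cref{thm:chernoff bound additive}) with deviation $\delta n - nq_t \ge \frac{\delta}{2}n$, union bound over the $L$ times, and take $L=e^{cn}$ with $c<\delta^2/4$.

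One small remark: of your two ways of handling small $n$, only the first is valid, namely reading the statement asymptotically as forced by $p=\omega(1/n)$, which the paper also does implicitly. The second fails because $Le^{-\delta^2 n/4}$ can be below $1$ for small $n$ while the single-time estimate breaks down there.
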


The previous proposition covers the remaining case $p=\omega(1/n)$.
We note that the absolute deviations from the mean in \cref{prop:metastability,prop:metastability large p} have comparable magnitude, as $q_* = \Theta(1)$ for $p=O(1/n)$.
To prove \cref{prop:metastability large p}, we use the following version of Chernoff.

\begin{theorem}[Additive Chernoff bound, \cite{dubhashi2009concentration}]\label{thm:chernoff bound additive}
Let $X_1,\dots,X_n$ be independent or negatively associated indicator random variables. 
Let $X := \sum_{i=1}^{n} X_i$ and let $\mu = \E[X]$.
For any $\Delta > 0$ it holds that:
\begin{equation*}
    \Pr( X - \mu > \Delta) \le e^{-\frac{2\Delta^2}{n}}.
\end{equation*}
\end{theorem}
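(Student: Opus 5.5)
The plan is the standard exponential-moment (Cramér--Chernoff) argument. The only place negative association enters is in bounding the moment generating function of $X$ by the product of the individual ones. Fix $\lambda>0$. By Markov's inequality applied to the nonnegative variable $e^{\lambda(X-\mu)}$,
\[
\Pr(X-\mu>\Delta)\le e^{-\lambda\Delta}\,\E\big[e^{\lambda(X-\mu)}\big]
= e^{-\lambda\Delta}\,e^{-\lambda\mu}\,\E\Big[\prod_{i=1}^n e^{\lambda X_i}\Big].
\]
In the independent case the expectation of the product factorizes exactly.

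In the negatively associated case I would show $\E\big[\prod_{i=1}^n e^{\lambda X_i}\big]\le\prod_{i=1}^n\E[e^{\lambda X_i}]$ by induction on $k$. The induction uses \cref{def:neg ass rvs} with $I=\{1,\dots,k\}$ and $J=\{k+1\}$, together with
\[
f(x_1,\dots,x_k)=\prod_{i\le k}e^{\lambda x_i}
\qquad\text{and}\qquad
g(x_{k+1})=e^{\lambda x_{k+1}}.
\]
Both functions are non-decreasing because $\lambda>0$. The function $f$ is non-decreasing as a product of nonnegative non-decreasing functions; this is the one point to check carefully. Negative association then gives $\E[fg]\le\E[f]\,\E[g]$. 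The induction hypothesis applies to $f$, since any subfamily of a negatively associated family is again negatively associated.

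With the factorization in hand, it remains to bound each factor $\E[e^{\lambda(X_i-\mu_i)}]$, where $\mu_i=\E[X_i]$ and $X_i\in\{0,1\}$. For this I would invoke Hoeffding's lemma: $\E[e^{\lambda(X_i-\mu_i)}]\le e^{\lambda^2/8}$ for any variable with values in $[0,1]$. I expect its proof to be the main technical step. Here it is simple, because the variable is Bernoulli$(\mu_i)$. Write
\[
\varphi(\lambda)=\ln\big(1-\mu_i+\mu_i e^{\lambda}\big)-\lambda\mu_i .
\]
Then $\varphi(0)=\varphi'(0)=0$. Moreover $\varphi''(\lambda)=\theta(1-\theta)\le 1/4$, where $\theta=\mu_ie^\lambda/(1-\mu_i+\mu_ie^\lambda)\in[0,1]$. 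Taylor's theorem with remainder then yields $\varphi(\lambda)\le\lambda^2/8$.

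Multiplying the $n$ factor bounds gives
\[
\Pr(X-\mu>\Delta)\le\exp\!\Big(-\lambda\Delta+\frac{n\lambda^2}{8}\Big).
\]
Choosing the minimizer $\lambda=4\Delta/n>0$ gives the exponent $-2\Delta^2/n$, which is the claimed bound.
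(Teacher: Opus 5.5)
Your proof is correct. The Markov step, the inductive use of negative association with $f=\prod_{i\le k}e^{\lambda x_i}$ and $g=e^{\lambda x_{k+1}}$ (both non-decreasing for $\lambda>0$) to get $\E[e^{\lambda X}]\le\prod_i\E[e^{\lambda X_i}]$, the Bernoulli case of Hoeffding's lemma via $\varphi''=\theta(1-\theta)\le 1/4$, and the choice $\lambda=4\Delta/n$ all check out. The paper gives no proof of its own and simply cites Dubhashi and Panconesi, whose standard proof is the same moment-generating-function argument, with negative association entering only through that product bound.
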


\begin{proof}[Proof of \Cref{prop:metastability large p}]
By \cref{lemma: expectation St}, the fact that $q_t \le q_*$ for all $t$, and \cref{coro: metastable regimes}, since $p = \omega(1/n)$ it follows that $\E[|S_t|] = nq_t \le nq_* = o(n)$.
Therefore, for any constant $\delta \in (0,1)$, it holds that $\delta n - q_t n \ge \frac{\delta}{2}n$.
By \cref{thm:chernoff bound additive} we have
\[
    \Pr\left( |S_t| > \delta n \right) 
    = \Pr\left( |S_t| - q_t n > \delta n - q_t n \right) 
    \le e^{-\frac{(\delta n - q_t n)^2}{n}}
    = e^{-(\delta - q_t)^2 n}
    \le e^{-\frac{\delta^2}{4}n}.
\]
By taking a union bound over all times $t \in \{T,\ldots,T+L-1\}$ we conclude that
\[
    \Pr\left(\exists t \in \{T,\ldots,T + L-1\} : |S_t| > \delta n \right) 
    \le L e^{-\frac{\delta^2}{4}n}.
\]

The corollary follows by picking $L = e^{cn}$, for any $c < \delta^2/4$.
\end{proof}

\section{Hitting time}\label{sec:hitting time}
In this section we study the expected hitting time of CCCP.
In particular in \cref{sec:algorithm} we give an algorithm to compute the exact hitting time, in \cref{sec:mean field} we perform a mean-field analysis of the expected hitting time, and in \cref{sec:bounds} we give upper and lower bounds that drop the mean-field assumptions.

\subsection{Exact computation}\label{sec:algorithm}
In this section we start to study the expected hitting time of CCCP. As we will see, the exact formulation does not admit a closed-form expression. 
However, we describe how it can be computed exactly, proving the following statement.
\begin{proposition}
There is an algorithm that, given input $n,p$, computes $\E[T_{n,p}]$ in time $O(n^2)$.
\end{proposition}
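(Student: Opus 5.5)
The plan is to reduce CCCP to a birth–catastrophe chain on $\{0,\dots,n\}$ and then exploit upward skip-freeness to compute the hitting time with a single linear-time-per-state recursion. We treat $p=1$ separately: there the algorithm outputs $\infty$, by the observation that $T_{n,1}=\infty$. From now on we assume $p<1$ and work in the unit-cost arithmetic model.

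\textbf{Step 1 (lumping).} By symmetry among coupon types, the transition law of $S_{t+1}$ given $S_t$ depends on $S_t$ only through $k=|S_t|$. Hence $K_t:=|S_t|$ is itself a Markov chain, and $T_{n,p}$ is its hitting time of $n$. From state $k$, the collector first draws a new coupon with probability $\frac{n-k}{n}$, giving $m=k+1$ held coupons, and otherwise keeps $m=k$. Then each of the $m$ coupons survives independently with probability $1-p$. Therefore
\[
P(k,j)=\tfrac{n-k}{n}\binom{k+1}{j}(1-p)^j p^{k+1-j}+\tfrac{k}{n}\binom{k}{j}(1-p)^j p^{k-j},\qquad 0\le j\le k+1 .
\]
In particular the chain never jumps up by more than one. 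Moreover $P(k,k+1)=\frac{n-k}{n}(1-p)^{k+1}>0$ for every $k<n$, which is what makes the recursion below well defined. Each row $P(k,\cdot)$ can be computed in $O(k)$ time. To do so, we generate the two binomial pmfs by the ratio recurrence $\binom{m}{j+1}/\binom{m}{j}=\frac{m-j}{j+1}$, starting from $p^m$. If $p=0$, we instead start from the top term $(1-p)^m$ and run the recurrence downward. Computing all rows therefore costs $O(n^2)$ in total.

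\textbf{Step 2 (skip-free decomposition).} Since upward moves are of size one, the chain must pass through every level to reach $n$. So if $D_k$ denotes the expected time to reach $k+1$ starting from $k$, then $\E[T_{n,p}]=\sum_{k=0}^{n-1}D_k$. We condition on the first step from $k$. A jump to $k+1$ finishes the passage. Staying at $k$ restarts it. A jump to $j<k$ requires climbing back through levels $j,\dots,k-1$ and then starting over from $k$. Writing $E_j:=\sum_{i<j}D_i$ for the prefix sums, this gives
\[
D_k = 1+P(k,k)D_k+\sum_{j<k}P(k,j)\big(E_k-E_j+D_k\big),
\]
which rearranges to
\[
D_k\,P(k,k+1)=1+\sum_{j<k}P(k,j)\,(E_k-E_j).
\]
We then compute $D_0,D_1,\dots,D_{n-1}$ in order, maintaining the prefix sums $E_j$ as we go. Step $k$ costs $O(k)$, so this phase also takes $O(n^2)$. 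The output is $E_n$.

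\textbf{Main obstacle.} The main point needing care is justifying the decomposition in Step 2: the strong Markov property at the first hitting times of each level, together with finiteness of every $D_k$. Finiteness follows from the earlier observation that the chain completes almost surely when $p<1$, since that argument applies from every state. Once this is settled, the remaining work is to verify the two $O(n^2)$ cost counts above. Numerical precision is a secondary concern and is set aside by the unit-cost assumption.
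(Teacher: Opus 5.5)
Your argument is correct, but it takes a different route from the paper. The paper sets up the whole system $(I-Q)\mathbf{h}=\mathbf{1}$ for $h(k)=\E[T_{n,p}\cond K_0=k]$. It then solves it by Gaussian elimination specialized to the lower-Hessenberg shape: elimination to an upper-bidiagonal matrix, followed by back-substitution.

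Your level decomposition is exactly the change of variables $D_k=h(k)-h(k+1)$. Using $\sum_j P(k,j)=1$, row $k$ of the paper's system becomes $P(k,k+1)D_k=1+\sum_{j<k}P(k,j)(E_k-E_j)$. This row involves only $D_0,\dots,D_k$. So the Hessenberg system turns into a lower-triangular one with positive diagonal entries $P(k,k+1)$, solved by forward substitution with $h(0)=E_n$.

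Your route buys several things:
\begin{itemize}
\item a probabilistic derivation in which uniqueness of the solution is evident;
\item the passage times $D_k$ directly, as primary unknowns;
\item an explicit $O(n^2)$ accounting for generating the entries of $P$, and explicit handling of $p\in\{0,1\}$, both of which the paper leaves implicit.
\end{itemize}
Moreover, rewriting $\sum_{j<k}P(k,j)(E_k-E_j)=\sum_{i<k}D_i\Pr(K_{t+1}\le i\cond K_t=k)$ makes the recursion subtraction-free. That is numerically attractive when $\E[T_{n,p}]$ is exponentially large. The paper's route, in turn, is generic linear algebra that needs nothing beyond the Hessenberg structure.

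Two small things to tighten:
\begin{itemize}
\item It is the law of $|S_{t+1}|$ (not of $S_{t+1}$) given $S_t$ that depends only on $|S_t|$. That is what your formula for $P(k,j)$ actually checks.
\item Almost-sure completion does not by itself give $D_k<\infty$. Argue instead as follows. From any state the chain climbs straight to $n$ in at most $n$ steps with probability at least $\prod_{i<n}P(i,i+1)>0$, so the hitting time of $n$ has geometric tails. Then $D_k$ is bounded by the expected hitting time of $n$ from $k$, since the chain cannot reach $n$ without first visiting $k+1$.
\end{itemize}
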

\begin{proof}
Let $K_t = |S_t|$. Note that in order to compute the expected hitting time $\E[T_{n,p}]$ we can study the simpler Markov Chain $(K_t)_{t\ge 0}$ associated to CCCP and note that
\[
    T_{n,p} = \min\{\, t\ge 0 : K_t = n \,\}.
\]
is an equivalent formulation of the hitting time of CCCP.
This process is simpler in that it consists only of $n+1$ states, i.e., $\{0,1,\ldots,n\}$.
In particular, the transition matrix $P \in \mathbb{R}^{(n+1) \times (n+1)}$ of the reduced process can be fully characterized as follows. 
For every $k$, we have:
\[
\begin{cases}
    P_{k,k-i} = \left(1-\frac{k}{n}\right)\binom{k+1}{i+1}(1-p)^{k-i}p^{i+1} + \frac{k}{n} \binom{k}{i}(1-p)^{k-i}p^i, & \forall i \in \{0,\ldots,k\},
    \\
    P_{k,k+1} = \left(1-\frac{k}{n}\right)(1-p)^{k+1},&\\
    P_{k,k+i} = 0, & \forall i \in \{2,\ldots,n-k\}.
\end{cases}
\]
Note that $P$ is stochastic (rows sums to 1) and has a special structure, being a \textit{lower-Hessenberg matrix}, i.e., all entries above the first superdiagonal are 0~\cite{golub2013matrix}. 
We give a graphical representation of CCCP and its corresponding transition matrix in \cref{fig:cccp-reduced mc + hessenberg}. Note that we will compute $\E[T_{n,p}]$ by solving a system of linear equations described by the transition matrix of CCCP. We mention the special structure of this matrix because it will be leveraged to speed-up the computation of $\E[T_{n,p}]$.

\begin{figure}[!ht]
    \centering
    \begin{minipage}{0.45\textwidth}
        \centering
        \includegraphics[width=\linewidth]{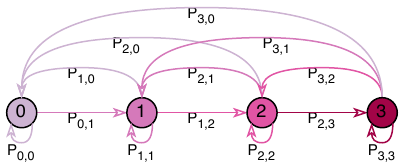}
    \end{minipage}
    \begin{minipage}{0.45\textwidth}
        \centering
        $\begin{pmatrix}
        P_{0,0} & {\color{magenta} P_{0,1}} & 0 & 0 \\
        P_{1,0} & P_{1,1} & {\color{magenta} P_{1,2}} & 0 \\
        P_{2,0} & P_{2,1} & P_{2,2} & {\color{magenta} P_{2,3}} \\
        P_{3,0} & P_{3,1} & P_{3,2} & P_{3,3} \\
        \end{pmatrix}$
    \end{minipage}
    \caption{A graphical representation of the reduced Markov Chain of CCCP with $n=3$ coupons (left) and its corresponding lower-Hessenberg transition matrix with first superdiagonal entries in magenta (right).}
    \label{fig:cccp-reduced mc + hessenberg}
\end{figure}

In order to do so, let us define now
\[
    h(k) := \E[T_{n,p} \cond K_0 = k].
\]
Since in CCCP $K_0 = 0$, by the law of total expectation we compute $h(0) = \E[T_{n,p} \cond K_0 = 0]= \E[T_{n,p}]$ recursively as follows:
\[
\begin{cases}
h(n) = 0,\\
h(k) = 1 + \sum_{i=0}^{k+1} P_{k,i} \cdot h(i), & \forall k \in \{0,\ldots,n-1\},
\end{cases}
\]
where $h(n)=0$ is the boundary condition that stops the recursion at the first time the process hits state $n$, and the sum in $h(k)$ goes up to $k+1$ since $P_{k,k+i}=0$ for all $i>1$.
In words, in this recursive formulation we pay 1 for each step of the chain plus the expected hitting time from the state we reach.
By plugging in the equivalence $h(n)=0$ we get a linear system of $n$ equations in $n$ unknowns $h(0),\dots,h(n-1)$.
Now, let $I$ be the identity matrix and $Q \in \mathbb{R}^{n \times n}$ be the matrix of transient states of CCCP (i.e., $P$ without the last row and column). 
We write the system in matrix form and, by rearranging terms, we get
\begin{equation}\label{eq:Ah=b}
    A \mathbf{h} = \mathbf{b},
\end{equation}
where $A = (I - Q) \in \mathbb{R}^{n \times n}$ is such that
\[
    A_{k,i} = \begin{cases}
        1 - P_{k,i} & \text{if } k=i,
        \\
        - P_{k,i} & \text{if } k\neq i,
    \end{cases}
\]
$\mathbf{b} = (1, \ldots, 1)^\intercal \in \mathbb{R}^n$ is the vector of all ones, and $\mathbf{h} = (h(0), \ldots, h(n-1))^\intercal$ is the vector of unknowns.
Note that the matrix $A$ is non-singular since $Q$ is the submatrix of transient states of CCCP and, therefore, the system in \cref{eq:Ah=b} has a unique solution that equals the vector of expected absorbing times.
Moreover, note that $A$ inherits the lower-Hessenberg structure of $P$. 
This structure allows to specialize the Gaussian elimination algorithm, solving the system more efficiently~\cite{golub2013matrix}.
We proceed as follows:
\begin{enumerate}
  \item \textbf{Forward-elimination:} The goal is to transform the system in \cref{eq:Ah=b} into an equivalent one with an upper-triangular coefficient matrix, i.e., all elements below the diagonal are 0.
    As in standard Gaussian elimination, for each pivot (a diagonal entry of $A$) we \textit{eliminate} the entries below the pivot and set them to 0 by subtracting the correctly rescaled pivot row, so that the system stays equivalent. 
    As $A$ is lower-Hessenberg, the first row has only two nonzero entries: when we subtract it from all rows below, only $O(n)$ terms are affected instead of $O(n^2)$.
    After eliminating the first column, also the second row has only two nonzero entries, and so on iteratively for every row.
    This property allows to perform the forward-elimination step efficiently.
    More formally, for each $c=0,\dots,n-2$, for each $r>c$, let $m_{r,c} := A_{r,c}/A_{c,c}$ and perform the following updates:
    \begin{align*}
        A_{r,c} &\gets A_{r,c} - m_{r,c} \cdot A_{c,c} = 0,
        \\
        A_{r,c+1} &\gets A_{r,c+1} - m_{r,c} \cdot A_{c,c+1},
        \\
        \mathbf{b}_{r} &\gets \mathbf{b}_{r} - m_{r,c} \cdot \mathbf{b}_{c}.
    \end{align*}
    For each $c$ there are $n-c-1$ entries below the diagonal that need to be eliminated, and each elimination step only updates two entries of $A$ and one entry of $\mathbf{b}$.
    Hence, summing over all columns, we have a total running time of $\sum_{c=0}^{n-2} 3(n-c-1) = O(n^2)$.

  \item \textbf{Backward-substitution:} After the forward-elimination, the system in \cref{eq:Ah=b} has been transformed into an equivalent upper-triangular system
    \[
        U \mathbf{h} = \mathbf{b}'.
    \]
    Note that $U$ is also still lower-Hessenberg, hence it has only two nonzero entries in each row, the diagonal and the first superdiagonal. 
    This also allows to speed-up the backward-substitution step computing the vector $\mathbf{h}$ as follows:
    \begin{align*}
        h(n-1) &= \frac{\mathbf{b}'_{n-1}}{U_{n-1,n-1}},
        \\
        & \,\,\,\vdots
        \\
        h(k) &= \frac{\mathbf{b}'_k - U_{k,k+1} \cdot h(k+1)}{U_{k,k}},
        \\
        & \,\,\,\vdots
        \\
        h(0) &= \frac{\mathbf{b}'_0 - U_{0,1} \cdot h(1)}{U_{0,0}}.
    \end{align*}
    Each step costs $O(1)$ to rearrange the equation, so backward-substitution costs $O(n)$ overall.
\end{enumerate}
The procedure computes the entire vector of absorbing states $\mathbf{h}$ and, in particular, the expected hitting time of CCCP $h(0)=\E[T_{n,p}]$ in time $O(n^2) + O(n) = O(n^2)$.
\end{proof}

\subsection{Mean-field analysis}\label{sec:mean field}

In this section we perform a mean-field analysis of the expected hitting time $T_{n,p}$. 
In particular we analyze it by assuming the spatial and temporal independence, as follows.
We remark that such assumptions do not hold in CCCP, but for the sake of the argument we make them nonetheless, as they are sufficient to derive clean bounds on the expected hitting time that capture the essential physics of CCCP.
Specifically, the analysis reveals a two-stage process: first, a rapid climb where the collection grows quickly toward the metastable state $nq_*$; second, a prolonged stochastic drift where the system fluctuates around $nq_*$ for an incredibly long duration.
In fact, interestingly, the bounds derived in this section essentially match the empirical values of the hitting time observed in the simulations described in the introduction (see \cref{fig:simulation}).

\begin{assumption}[Spatial and temporal independence]
Recall the random variables $(X_{i,t})_{i\in[n],t\ge 0}$ defined in \cref{eq:X_it}.
Assume:
\begin{itemize}\label{ass:spatial temporal independence}
    \item \textbf{Spatial independence:} For every fixed time $t$, the random variables $(X_{i,t})_{i\in[n]}$ are independent. Therefore, by \cref{eq:qt} we have
    \[
        \Pr(S_t = [n])
        = \Pr\left( \bigwedge_{i=1}^{n} \{\, X_{i,t}=1 \,\} \right)
        = q_t^n.
    \]
    \item \textbf{Temporal independence:} For every fixed coupon $i$, the process 
    \(
        (X_{i,t})_{t\ge 0}
    \)
    consists of independent random variables across time, and families corresponding to different coupons are mutually independent. Therefore, using also the spatial independence we have
    \[
        \Pr(T_{n,p} > t)
        = \Pr\left( \bigwedge_{j=0}^{t} \{\,S_j\neq[n]\,\} \right)
        = \prod_{j=0}^t \big(1 - q_j^{\,n}\big).
    \]
\end{itemize}
\end{assumption}

\begin{proposition}[Mean-field expected hitting time]\label{prop:mean-field}
Let $q_*$ as in \cref{eq: fixed point q*} and, for any $\epsilon>0$, let $\Tmar = \Tmar(\epsilon)$ as in \cref{def: T_mix marginal}. 
Under \cref{ass:spatial temporal independence}, it holds that:
\[
    \Tmar\cdot(1 - q_*^n\, \Tmar) + \frac{(1- q_*^n)^{\Tmar+1}}{q_*^n}
    \le \E[T_{n,p}]
    \le \Tmar + \frac{1}{(q_*-\epsilon)^n}.
\]
\end{proposition}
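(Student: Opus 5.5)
We start from the tail-sum formula $\E[T_{n,p}] = \sum_{t\ge 0}\Pr(T_{n,p}>t)$. Under \cref{ass:spatial temporal independence} this gives
\[
\E[T_{n,p}] = \sum_{t\ge 0}\prod_{j=0}^{t}\big(1-q_j^{\,n}\big).
\]
We then split the sum at $\Tmar$. Since $q_0=0<q_*$, \cref{lemma: monotonicity} shows that $q_t$ increases monotonically to $q_*$, so $0\le q_j\le q_*$ for every $j$. Moreover, $q_j\ge q_*-\epsilon$ for every $j\ge \Tmar$, by \cref{def: T_mix marginal}.

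\textbf{Upper bound.} For $t<\Tmar$ we bound each term by $1$, which contributes at most $\Tmar$. For $t\ge\Tmar$ we use $q_j\ge q_*-\epsilon$ for $j\ge\Tmar$ and discard the factors with $j<\Tmar$ (each is at most $1$). This gives
\[
\prod_{j=0}^t(1-q_j^n)\le\big(1-(q_*-\epsilon)^n\big)^{t-\Tmar+1}.
\]
Summing this geometric series over $t\ge \Tmar$ yields at most $\frac{1-(q_*-\epsilon)^n}{(q_*-\epsilon)^n}\le (q_*-\epsilon)^{-n}$. Adding the two pieces gives $\Tmar + (q_*-\epsilon)^{-n}$. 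Here we assume $\epsilon<q_*$, which is implicit for the bound to make sense.

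\textbf{Lower bound.} We use $q_j\le q_*$, so every factor satisfies $1-q_j^n\ge 1-q_*^n$ and hence
\[
\prod_{j=0}^t(1-q_j^n)\ge (1-q_*^n)^{t+1}.
\]
For $t<\Tmar$ we apply Bernoulli's inequality, $(1-q_*^n)^{t+1}\ge 1-(t+1)q_*^n\ge 1-\Tmar q_*^n$. Summing over the $\Tmar$ indices $t=0,\dots,\Tmar-1$ gives $\Tmar(1-q_*^n\Tmar)$. For $t\ge\Tmar$ we sum the geometric tail exactly,
\[
\sum_{t\ge\Tmar}(1-q_*^n)^{t+1}=\frac{(1-q_*^n)^{\Tmar+1}}{q_*^n}.
\]
Adding the two contributions gives the claimed lower bound.

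\textbf{Technical points.} The main care needed is in the bookkeeping. First, $\Tmar$ should be treated as an integer, taking the ceiling of the expression in \cref{lem:boundonTmix}, so that the split of the sum is clean. Second, the exponents $t+1$ and $t-\Tmar+1$ must be tracked precisely, since they determine the exact form of both sides. Beyond this there is no real obstacle: both bounds reduce to monotonicity of $q_t$ together with Bernoulli's inequality and geometric sums.
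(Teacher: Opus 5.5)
Your proof is correct and follows the paper's argument. You split the tail sum at $\Tmar$. For the upper bound you bound the pre-mixing terms by $1$ and the post-mixing products by a geometric series in $(q_*-\epsilon)^n$; for the lower bound you use $q_j\le q_*$ and sum the geometric tail exactly. The only differences are cosmetic. For $t<\Tmar$ the paper gets $1-(t+1)q_*^n$ from a union bound over the events $\{S_j=[n]\}$, which needs only spatial independence, whereas you get it from the product formula plus Bernoulli's inequality. You also state the condition $\epsilon<q_*$ explicitly, which the paper leaves implicit.
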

\begin{proof}
For the sake of readability, let $T=\Tmar(\epsilon)$.
Since $T_{n,p} \ge 0$, we can write the expected hitting time using the tail-sum formula as follows:
\begin{align}
    \E[T_{n,p}] &= \sum_{t=0}^{\infty} \Pr(T_{n,p} > t) \notag\\
    &= \sum_{t=0}^{T-1} \Pr(T_{n,p} > t) + \sum_{t=T}^{\infty} \Pr(T_{n,p} > t) \notag\\
    &= \sum_{t=0}^{T-1} \Pr(T_{n,p} > t) + \sum_{k=0}^{\infty} \Pr(T_{n,p} > T+k).
    \label{eq: E[T_np] split}
\end{align}
By our choice of $T$, we can consider the contributions of two terms in \cref{eq: E[T_np] split}: first, the \textit{marginal mixing time} where the $q_t$'s are still small and growing towards $q_*$ (see \cref{lemma: monotonicity}); 
second, the \textit{escape time} where the $q_t$'s are very close to $q_*$ and the hitting time can be approximated by a geometric random variable thanks to the independence assumptions.

In the first term of \cref{eq: E[T_np] split}, for $t < T$, we have that $q_t < q_*$. 
By using that $\Pr(T_{n,p} > t) \le 1$ we have
\begin{equation*}
    \sum_{t=0}^{T-1} \Pr(T_{n,p} > t) \le T.
\end{equation*}
By a union bound and due to the spatial independence assumption, we get
\begin{align*}
    \sum_{t=0}^{T-1} \Pr(T_{n,p} > t) 
    &= \sum_{t=0}^{T-1} \left( 
        1 - \Pr\bigg(  \bigvee_{j=0}^{t} \{\, S_j = [n] \,\} \bigg) 
    \right) 
    \\
    &\ge \sum_{t=0}^{T-1} \left(
        1 - \sum_{j=0}^t q_j^n
    \right)
    \\
    &\ge T (1 - T q_*^n).
\end{align*}

In the second term of \cref{eq: E[T_np] split}, for $t \ge T$, we have that $|q_t - q_*| \le \epsilon$. 
Therefore, as a consequence of the spatial and temporal independence assumptions we get
\begin{align*}
    \sum_{k=0}^{\infty} \Pr(T_{n,p} > T+k)
    &= \sum_{k=0}^{\infty} \prod_{j=0}^{T+k} \left( 1 - q_j^n \right)
    \\
    &\le \sum_{k=0}^{\infty} \prod_{j=T}^{T+k}
    \left( 1 - q_j^n \right)
    \\
    &\le \sum_{k=0}^{\infty} (1- (q_* - \epsilon)^n)^k
    \\
    &= \frac{1}{(q_*-\epsilon)^n}.
\end{align*}
Similarly, since $q_0 = 0$ by \cref{lemma: monotonicity} we have $q_j \le q_*$ for all $j$, and therefore
\begin{align*}
    \sum_{k=0}^{\infty} \Pr(T_{n,p} > T+k)
    &= \sum_{k=0}^{\infty} \prod_{j=0}^{T+k} \left( 1 - q_j^n \right)
    \\
    &\ge \sum_{k=0}^{\infty} (1- q_*^n)^{T+k+1}
    \\
    &= (1- q_*^n)^{T+1} \sum_{k=0}^{\infty} (1- q_*^n)^{k}
    \\
    &= \frac{(1- q_*^n)^{T+1}}{q_*^n}.
\end{align*}
We can substitute the bounds we derived in \cref{eq: E[T_np] split} to conclude the proof. 
\end{proof}
\begin{observation}
Under \cref{ass:spatial temporal independence}, for sufficiently small positive $\epsilon$, we have
\[
    \E[T_{n,p}] \approx \Tmar + q_*^{-n}.
\]
\end{observation}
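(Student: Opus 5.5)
The plan is to start from the two-sided bound of \cref{prop:mean-field} and show that, for a suitably small fixed $\epsilon$, both sides agree with $\Tmar + q_*^{-n}$ up to lower-order corrections. Throughout I take $\epsilon = \eta q_*/n$ with $\eta>0$ small. By \cref{lem:boundonTmix}, this choice gives $\Tmar = \Theta\big(\ln(n/\eta)/(p+1/n)\big)$. In particular $\Tmar$ is polynomial in $n$ and $1/p$.

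For the upper bound I rewrite
\[
(q_*-\epsilon)^{-n} = q_*^{-n}\,(1-\eta/n)^{-n}.
\]
As $n\to\infty$ this equals $q_*^{-n}e^{\eta(1+o(1))}$, and the factor $e^{\eta(1+o(1))}$ tends to $1$ as $\eta\to 0$. The right-hand side of \cref{prop:mean-field} is therefore $\Tmar + (1+O(\eta))\,q_*^{-n}$.

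For the lower bound I split into two regimes according to the size of $\Tmar q_*^n$.
\begin{itemize}
\item If $\Tmar q_*^n = o(1)$, which covers every regime where $q_*^{-n}$ grows faster than $\Tmar$, then $\Tmar(1-q_*^n\Tmar) = \Tmar(1-o(1))$. Also
\[
(1-q_*^n)^{\Tmar+1} \ge 1-(\Tmar+1)q_*^n = 1-o(1).
\]
So the left-hand side is $(1-o(1))\big(\Tmar + q_*^{-n}\big)$.
\item If instead $\Tmar q_*^n = \Omega(1)$, then $q_*^{-n} = O(\Tmar)$, so the escape term is dominated by the mixing term. Here I only need $\E[T_{n,p}] = \Theta(\Tmar)$. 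The upper bound already gives this. For the lower bound I would replace $\Tmar$ by a smaller time $T' = \Tmar/C$, using monotonicity of $q_t$ (\cref{lemma: monotonicity}). With $T'$ in place of $\Tmar$ inside the union-bound step, I need $T' q_{T'}^n$ to be small.
\end{itemize}

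The main obstacle is this last subcase, which arises when $p = o(\ln n/n^2)$ and $q_*$ is close to $1$. The difficulty is that $q_{T'}^n$ may not be small. I would handle it directly via \cref{lemma: monotonicity}: $q_t = q_*(1-a^t)$, and for $t \le (1-\delta)\,n\ln n$ we have $a^t \ge n^{-1+\delta}$, which gives
\[
q_t^n \le e^{-n\,a^t} \le e^{-n^{\delta}}.
\]
This makes the union bound negligible and yields $\E[T_{n,p}] \ge (1-o(1))\Tmar$ in that regime as well.

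Combining the two regimes, and letting $\eta\to 0$ slowly, the ratio $\E[T_{n,p}]/(\Tmar+q_*^{-n})$ tends to $1$. That is the meaning of $\approx$ in the statement.
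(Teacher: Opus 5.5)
Two of your refinements are needed, and the paper does not make them. The paper only reads the observation off \cref{prop:mean-field}. First, $\epsilon$ must be $O(q_*/n)$: for constant $\epsilon$ the factor $(1-\epsilon/q_*)^{-n}$ grows exponentially. Second, the paper's lower bound $\Tmar(1-q_*^n\Tmar)$ is useless, even negative, when $\Tmar q_*^n$ is not small.

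The gap is in your second bullet. There you only aim for $\E[T_{n,p}]\ge(1-o(1))\Tmar$. But your case $\Tmar q_*^n=\Omega(1)$ contains the crossover window $\Tmar q_*^n=\Theta(1)$, where $q_*^{-n}=\Theta(\Tmar)=\Theta(n\ln n)$; this happens for $n^2p=\ln n+\ln\ln n+O(1)$. In that window your lower bound $(1-o(1))\Tmar$ and the upper bound $\Tmar+(1+O(\eta))q_*^{-n}$ differ by a factor $1+\Theta(1)$. So your final claim that the ratio tends to $1$ does not follow; you only get $\E[T_{n,p}]=\Theta(\Tmar+q_*^{-n})$ there. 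Your sentence ``the escape term is dominated by the mixing term'' treats $O(\Tmar)$ as if it were $o(\Tmar)$. Re-running the split of \cref{prop:mean-field} at $T'$ instead of $\Tmar$ does not fix this. The paper's bound $\Pr(T_{n,p}>T'+k)\ge(1-q_*^n)^{T'+k+1}$ replaces every early $q_j$ by $q_*$, which costs a factor $e^{-T'q_*^n}=e^{-\Theta(1)}$ on the escape term.

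The missing step is to combine the product form of the temporal-independence assumption with your early-time estimate. Take $T'=(1-\delta)n\ln n$, so that $q_j^n\le e^{-(1-o(1))n^{\delta}}$ for $j<T'$. Then for every $k\ge 0$,
\[
\Pr(T_{n,p}>T'+k)=\prod_{j<T'}(1-q_j^n)\prod_{j=T'}^{T'+k}(1-q_j^n)\ge\Big(1-\sum_{j<T'}q_j^n\Big)(1-q_*^n)^{k+1},
\]
and summing gives $\E[T_{n,p}]\ge(1-o(1))T'+(1-o(1))(q_*^{-n}-1)$. Choose $\delta\to0$ with $n^{\delta}\gg\ln n$ and $\ln(1/\eta)=o(\ln n)$. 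The second condition is what ``$\eta\to0$ slowly'' has to mean, since $\Tmar\approx n\ln(n/\eta)$ in this regime. Then $T'=(1-o(1))\Tmar$, and hence $\E[T_{n,p}]\ge(1-o(1))(\Tmar+q_*^{-n})$ uniformly for $p=O(\ln n/n^2)$. This covers the crossover window and makes your case split unnecessary in that range. Your first bullet still handles $p=\omega(\ln n/n^2)$, where $\Tmar q_*^n=o(1)$.
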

The previous approximation defines the following \textit{regimes} of the expected hitting time:
\[
\E[T_{n,p}] \approx \begin{cases}
    \Theta(n \ln n) & \text{if } p=o\big(\frac{\ln n}{n^2}\big), \\
    n^{\Theta(c)} & \text{if } p=c\,\frac{\ln n}{n^2}, \text{ for constant $c>0$}, \\
    e^{\Theta(n^2 p)} & \text{if } p=\omega\big(\frac{\ln n}{n^2}\big) \text{ and } p=o(1/n), \\
    \Theta\big((1+c)^n\big) & \text{if } p=\frac{c}{n}, \text{ for constant $c>0$}, \\
    \Theta\big((\frac{np}{1-p})^n\big) & \text{if } p=\omega\big(\frac{1}{n}\big).
\end{cases}
\]
We describe these regimes in more detail. Towards this, recall the bounds on $q_*$ (\Cref{lemma: fixed point}) and $\Tmar$ (\Cref{lem:boundonTmix}). 
\begin{enumerate}
\item \label{item:classical}\textbf{Classical regime:} When $p=o(\frac{\ln n}{n^2})$, then $\Tmar$ dominates. One can show that $\Tmar=\Theta(n\ln n)$. Thus, CCCP asymptotically behaves as if $p=0$, i.e., as the classical coupon collector's problem.
\item\label{item:superclassical} \textbf{Super-classical regime:} When $p=\frac{c\ln n}{n^2}$, then $q_*^{-n}$ dominates. One can show that $\E[T_{n,p}]$ is ``super-classical'' but polynomial. This is is because,
\begin{align*}
    q_{*}^{-n}=\left(\frac{1-\frac{c\ln n}{n^2}}{1-\frac{c\ln n}{n^2}+\frac{c\ln n}{n}}\right)^{-n}=\left(1-\frac{\frac{c\ln n}{n}}{1-\frac{c\ln n}{n^2}+\frac{c\ln n}{n}}\right)^{-n}\approx e^{\frac{c}{1-o(1)}\ln n}=n^{\Theta(c)}. 
\end{align*}
\item\label{item:metastable1} \textbf{Metastable regime I:} When $p=\omega(\frac{\ln n}{n^2})$ and $p=o(\frac{1}{n})$, then $q_*^{-n}$ dominates, which we can bound as follows.
\begin{align*}
    q_{*}^{-n}=\left(\frac{1-p}{1-p+np}\right)^{-n}\approx e^{\frac{n^2p}{1-p}}=e^{\Theta(n^2p)}.
\end{align*}
Thus, $\E[T_{n,p}]$ is super-polynomial but sub-exponential.
\item\label{item:metastable2}\textbf{Metastable regime II:}  When $p=\frac{c}{n}$, we can again see that $q_*^{-n}$ dominates. Using routine algebra, we have,
\begin{align*}
    q_*^{-n}=\left({1+\frac{c}{1-p}}\right)^{n}=\Theta((1+c)^n).
\end{align*}
Thus, $\E[T_{n,p}]$ is exponential.
\item\label{item:metastable3} \textbf{Metastable regime III:} When $p=\omega(\frac{1}{n})$, we can again see that $q_*^{-n}$ dominates. Using routine algebra, we have,
\begin{align*}
    q_*^{-n}=\left(\frac{1-p+np}{1-p}\right)^{n}=\Theta\bigg(\Big(\frac{np}{1-p}\Big)^n\bigg).
\end{align*}
Thus, $\E[T_{n,p}]$ is super-exponential.
\end{enumerate}
While even more regimes can be defined, we believe these are the most interesting ones: 
\ref{item:classical}~and~\ref{item:superclassical} describe the regime for which the hitting time of CCCP is still polynomial; on the other hand \ref{item:metastable1}, \ref{item:metastable2}, and \ref{item:metastable3} describe the hitting times in the phases where CCCP has the three distinct regimes of the metastable fraction of collected coupons $q_*$ (see \cref{coro: metastable regimes}).

Note that in regimes \ref{item:classical}, \ref{item:superclassical}, and \ref{item:metastable1} we have that $q_* = 1-o(1)$, therefore the metastable window of exponential length described in \cref{prop:metastability,prop:metastability large p} might include fluctuations hitting state $[n]$, not contradicting the sub-exponential hitting time stated above.

Finally, we observe that the concentration of the hitting time $T_{n,p}$ around its expectation varies significantly across these regimes. 
Recall that in the classical coupon collector's problem ($p=0$) the hitting time $T_{n,0}$ is concentrated around its expectation (see \cref{obs:classical ccp}). 
In the classical regime ($p=o\big(\frac{\ln n}{n^2}\big)$), using \cref{ass:spatial temporal independence}, it is possible to prove a similar concentration of the hitting time $T_{n,p}$.
Informally, this is because the probability that the collector completes the collection after the marginal mixing time $\Tmar$ is large enough as the fraction of collected coupon is $q_* = 1-\Theta(np)$.
In contrast, in the other regimes, such a concentration does not hold. Indeed, the expected hitting time is dominated by the escape time from the metastable state which, under our mean-field assumptions, is approximated by a geometric random variable. Since the standard deviation of a geometric distribution is of the same order as its mean, the hitting time $T_{n,p}$ in these regimes undergoes large fluctuations and is not concentrated around its expected value.

\subsection{Asymptotic bounds}\label{sec:bounds}
Finally, in this section we drop \cref{ass:spatial temporal independence} and provide rigorous unconditional upper and lower bounds on the expected hitting time of CCCP.
In particular, we first give an upper bound that is still derived by the definition of a geometric random variable but with a different parameter compared to that of the mean-field analysis of \cref{sec:mean field}.
We conclude the section with a lower bound that, instead, is asymptotically tight with respect to our mean-field bound.

Let us consider the reduced chain $(K_t)_{t\ge 0}$ with $K_t = |S_t|$.
For $\epsilon\in(0,1)$, we define
\begin{equation}\label{eq:good states}
    \mathcal{G} = \mathcal{G}(\epsilon) := \{\, k \in \{0,\dots,n\} : k \ge (1-\epsilon) n q_* \,\}
\end{equation}
as the good set of states, namely that set of states that we reach in time $\Tmar$ and from which CCCP hits $[n]$ by waiting for the rare event in which all missing coupons are collected, and the others are not lost. 

\begin{lemma}\label{lemma:rho}
For $k \in \{0,1,\ldots,n\}$ and $\mathcal{G}$ as in \cref{eq:good states}, it holds that
\begin{align*}
    \min_{k \in \mathcal{G}} \Pr( \exists s \in \{t+1,\ldots,t + n \ln n\} : K_t = k, K_s = n )
    \ge q_* (1-\epsilon) (1-p)^{n^2 \ln n} =: \rho.
\end{align*}
\end{lemma}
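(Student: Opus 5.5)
The plan is to read the probability in the statement as conditional on $K_t=k$. I will exhibit, for every fixed $k\in\mathcal{G}$ and every time $t$, a single sufficient event for hitting $[n]$ within the window $\{t+1,\ldots,t+n\ln n\}$, and lower bound its probability. Rounding $n\ln n$ to an integer is ignored, since it only affects constants in the exponent. By the Markov property and symmetry over coupon types, it suffices to condition on $S_t$ being any fixed set of size $k$. The future then depends only on the fresh randomness $(C_r)_{r\ge t}$ and $(L_{i,r})_{i\in[n],\, r\ge t}$, which is independent of $\mathcal{F}_t$.

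The sufficient event is $E_1\cap E_2$, defined as follows.
\begin{itemize}
\item $E_1$ is the event that no loss occurs at all during the window, i.e.\ $L_{i,r}=0$ for all $i\in[n]$ and all $r\in\{t,\ldots,t+n\ln n-1\}$. It involves at most $n\cdot n\ln n$ independent indicators, so $\Pr(E_1)\ge(1-p)^{n^2\ln n}$.
\item $E_2$ is the event that every coupon in $[n]\setminus S_t$ is drawn at least once among $C_t,\ldots,C_{t+n\ln n-1}$.
\end{itemize}

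On $E_1$ the process evolves as a classical collector, $S_{r+1}=S_r\cup\{C_r\}$. Hence on $E_1\cap E_2$ we get $S_{t+n\ln n}=[n]$, so $K_s=n$ for $s=t+n\ln n\ge t+1$. Since $E_1$ is measurable with respect to the $L$'s and $E_2$ with respect to the $C$'s, the two events are independent. Therefore
\[
\Pr(E_1\cap E_2 \mid K_t=k)=\Pr(E_1)\Pr(E_2)\ge (1-p)^{n^2\ln n}\,\Pr(E_2).
\]

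It remains to lower bound $\Pr(E_2)$. Each missing coupon is never drawn in $n\ln n$ rounds with probability $(1-1/n)^{n\ln n}\le e^{-\ln n}=1/n$. A union bound over the $n-k$ missing coupons gives
\[
\Pr(E_2^c)\le \frac{n-k}{n}, \qquad\text{so}\qquad \Pr(E_2)\ge \frac{k}{n}\ge (1-\epsilon)q_*,
\]
where the last inequality is exactly the definition of $k\in\mathcal{G}$ in \cref{eq:good states}. Combining this with the previous display yields the bound $\rho$, uniformly in $k\in\mathcal{G}$ and $t$, and the minimum over $k$ follows.

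The step most likely to need care is not the bounds themselves, which are crude, but the independence and conditioning bookkeeping. Specifically, one must justify that conditioning on $K_t=k$ (rather than on a specific $S_t$) does not bias the future randomness. One must also check that the no-loss event really covers every coupon present at each round, including a freshly drawn one, so that the deterministic-union description is valid. Requiring $L_{i,r}=0$ for all $i\in[n]$, not only for the coupons currently held, handles both points at once. It costs only the exponent $n^2\ln n$, which is already what appears in $\rho$.
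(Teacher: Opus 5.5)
Your proposal is correct and follows essentially the same route as the paper's proof. You use the same two independent events (every missing coupon drawn within the block, no losses at all during the block), the same union bound $1-(n-k)(1-1/n)^{m}$, and the same choice $m=n\ln n$, which gives $\frac{k}{n}\,(1-p)^{n^2\ln n}\ge(1-\epsilon)q_*(1-p)^{n^2\ln n}$. Your explicit treatment of conditioning on $K_t=k$, by fixing $S_t$ and using only the fresh randomness, is slightly more careful than the paper, which leaves this step implicit.
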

\begin{proof}
For every $m \ge 1$, let 
\[
    F(k,m) := \{\, \exists s \in \{t+1,\ldots,t+m\} : K_t = k, K_s = n \,\}.
\]
Let us define the following two events needed to satisfy $F(k,m)$:
\begin{itemize}
    \item $E_1$: ``all the remaining $n-k$ coupons appear at least once in the $m$ draws of the block'';
    \item $E_2$: ``all coupons survive every step in the block''.
\end{itemize}
Note that $\Pr(F(k,m)) \ge \Pr(E_1 \land E_2) = \Pr(E_1) \cdot \Pr(E_2)$ as $E_1,E_2$ are independent in CCCP (the draws $C_t$ are independent of the losses $L_{i,t}$).
We have that
\begin{align*}
    \Pr(E_1) &\ge 1 - (n-k)(1-1/n)^m \ge 1-(n-k)e^{-m/n},
    \\
    \Pr(E_2) &\ge (1-p)^{mn}.
\end{align*}
Note that the minimum is achieved for $k = \min(\mathcal{G}) \ge (1-\epsilon)nq_*$, hence
\begin{align*}
    \min_{k \in \mathcal{G}} \Pr( F(k,m) ) 
    &\ge (1 - (n-k)e^{-m/n} ) \cdot (1-p)^{mn}
    \\
    &\ge (1 - (1- q_*(1-\epsilon))ne^{-m/n} ) \cdot (1-p)^{mn}.
\end{align*}
The lower bound is (approximately) optimized by picking $m = n \ln n$. Hence, we get
\[
    \min_{k \in \mathcal{G}} \Pr( F(k,m) ) 
    \ge q_* (1-\epsilon) (1-p)^{n^2 \ln n}.
    \qedhere
\]
\end{proof}

\begin{proposition}\label{prop:unconditional upper bound E[Tnp]}
Recall the definition of $\rho$ in \cref{lemma:rho}.
It holds that
\[
    \E[T_{n,p}] \le \Tmar(\epsilon) + \frac{n \ln n}{\rho}.
\]
\end{proposition}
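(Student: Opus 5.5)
The plan is to split the run into an initial phase of length $T := \Tmar(\epsilon)$ and then a sequence of disjoint blocks of length $m := n\ln n$. In each block we attempt a geometric-type success with probability at least $\rho$, using \cref{lemma:rho}. We write
\[
\E[T_{n,p}] \le T + \E[(T_{n,p}-T)^+] = T + \sum_{j\ge 0} \E\big[(T_{n,p}-T)^+ \mathbf{1}\{\cdots\}\big].
\]
The second term is controlled by bounding $\Pr(T_{n,p} > T + jm)$ geometrically in $j$. Summing over blocks then gives $\sum_{j\ge0} m\,\Pr(T_{n,p} > T+jm) \le m/\rho$.

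The key step is to show that, conditional on the history up to the start of each block (on the event that $[n]$ has not yet been hit), the probability of hitting $n$ within the block is at least $\rho$. This needs to hold regardless of the current state, not only for $k\in\mathcal{G}$. Here the plan is to exploit the monotone structure rather than the good set directly.

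\begin{itemize}
\item Re-examine the proof of \cref{lemma:rho}. The events $E_1$ (all missing coupons drawn in the block) and $E_2$ (no loss at all during the block) are independent of the past.
\item From any starting state $k$, the bound is $\Pr(E_1)\Pr(E_2) \ge (1-ne^{-m/n})(1-p)^{mn}$. With $m=n\ln n$ this degenerates, since $1-ne^{-m/n}=0$.
\item I would therefore try to use the monotone coupling of \cref{prop: monotone coupling}: a chain started from a larger set stays a superset.
\item Combined with the fact that after $T$ steps the state is in $\mathcal{G}$ with reasonable probability (via \cref{thm:chernoff bound} and $q_t\ge q_*-\epsilon$), this would give the per-block success probability for the first block.
\end{itemize}

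For later blocks I would need the chain to be back in $\mathcal{G}$ at each block start. One option is to restart the argument, allowing $T$ steps of re-mixing per attempt. That costs $T+m$ per attempt rather than $m$. To get exactly $T + m/\rho$, I would instead argue via the marginal bound: for every $t\ge T$, $\E[K_t]=nq_t\ge n(q_*-\epsilon)$. Then $\Pr(K_t\ge(1-\epsilon)nq_*)$ is bounded below, and this is essentially what the factor $q_*(1-\epsilon)$ in $\rho$ encodes. Concretely, $\Pr(E_1)$ is lower bounded by an averaging over $K_t$, using $\E[n-K_t]\le n(1-q_*+\epsilon)$ in place of the worst case.

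The main obstacle is exactly this conditioning issue. \cref{lemma:rho} is stated only for $k\in\mathcal{G}$, but the geometric argument needs a success probability uniform over histories conditioned on not having finished. My first attempt would be an unconditional bound at each block start: condition only on $\mathcal{F}_{T+jm}$ and use $\Pr(E_1 \mid K_{T+jm})\ge 1-(n-K_{T+jm})e^{-m/n}$, taken in expectation. However, conditioning on $\{T_{n,p}>T+jm\}$ could bias $K_{T+jm}$ downward.

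To handle this I would invoke \cref{prop: monotone coupling}-style monotonicity. Failure to have completed should be negatively correlated with large $K$, so I would accept a mild loss, or simply bound $\Pr(T_{n,p}>T+(j+1)m)\le \Pr(T_{n,p}>T+jm)(1-\rho)$ by this heuristic. The sum $\sum_j m(1-\rho)^j = m/\rho$ then completes the bound.
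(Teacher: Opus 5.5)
Your argument rests entirely on $\Pr(T_{n,p}>T+(j+1)m)\le(1-\rho)\Pr(T_{n,p}>T+jm)$. That is, it needs a success probability of at least $\rho$ in block $j$ given that $[n]$ has not yet been hit. You end by adopting this ``by this heuristic'', but it is precisely the step that needs proof, and none of your ingredients supplies it.

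A bound uniform in the current state is impossible with this $\rho$. From $\emptyset$, a block can succeed only if all $n$ coupons appear among its $m=n\ln n$ draws, which has probability tending to $e^{-1}$ (Erd\H{o}s--R\'enyi). Yet $\rho\to1-\epsilon$ when $0<p=o(1/(n^2\ln n))$, so for $\epsilon<1-e^{-1}$ no such uniform bound holds. Monotonicity in the initial state (same construction as \cref{prop: monotone coupling}) only confirms that $\emptyset$ is the worst case.

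Averaging $\Pr(E_1\mid K)\ge K/n$ over $K_{T+jm}$ bounds only the \emph{unconditional} probability that block $j$ contains a hit, which gives no geometric decay. Conditioning on $\{T_{n,p}>T+jm\}$ pushes $K$ down, which is the harmful direction. The naive correction $\E[K_s\mathbf{1}\{T_{n,p}>s\}]\ge\E[K_s]-n\Pr(T_{n,p}\le s)$ becomes vacuous exactly in the tail where decay is needed.

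Your first block also fails, because $\Tmar(\epsilon)$ uses an additive $\epsilon$ while $\mathcal{G}(\epsilon)$ uses a multiplicative one. For $p=c/n$ and a small constant $\epsilon$, at $T=\Tmar(\epsilon)$ one has $q_*-q_T\in(a\epsilon,\epsilon]$. So $\E[K_T]$ lies below the threshold $(1-\epsilon)nq_*$ of $\mathcal{G}$ by at least $\epsilon n(a-q_*)=\Theta(\epsilon n)$. The Chernoff bound then gives $K_T\notin\mathcal{G}$ with high probability, the opposite of what you claim.

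The paper avoids needing a uniform bound through a different reduction. Let $E$ be the event that the chain leaves $\mathcal{G}$ at some time in $[\Tmar(\epsilon),T_{n,p})$. The paper asserts $\E[T_{n,p}\mid E]\le\E[T_{n,p}]$ (state $0$ being the worst restart point) and rearranges to $\E[T_{n,p}]\le\E[T_{n,p}\mid\bar E]$. Since every block starts in $\mathcal{G}$ on $\bar E$, it then applies \cref{lemma:rho} block by block to get $\Pr(F_r\mid\bar E)\le(1-\rho)^r$. This reduction is the idea your plan lacks relative to the paper.

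Importing it will not by itself close the gap you correctly isolated, for two reasons. First, restarting at the exit time $\sigma$ only justifies $\E[T_{n,p}\mid E]\le\E[\sigma\mid E]+\E[T_{n,p}]$. Second, $\bar E$ depends on the trajectory up to $T_{n,p}$, so the Markov property at block starts bounds $\Pr(F_r\cap\bar E)$, not $\Pr(F_r\mid\bar E)$. This loses a factor $1/\Pr(\bar E)$. By the remark above, together with $\Pr(T_{n,p}\le T)\le(T+1)q_*^n$, that factor can be exponentially large for $p=c/n$.

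A block argument does rigorously give a version with longer blocks. For $m'=\lceil 2n\ln n\rceil$, the union bound yields success probability at least $\rho'=(1-1/n)(1-p)^{nm'}$ from every state. Hence $\E[T_{n,p}]\le m'/\rho'$, a bound of the same shape but with the exponent of $1-p$ doubled.
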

\begin{proof}
Recall the definition of $\mathcal{G}$ in \cref{eq:good states} and define the bad event $E$ of exiting $\mathcal{G}$ before completing the collection as
\[
    E := \{\, \exists s, \Tmar(\epsilon) \le s < T_{n,p} : K_s \not\in \mathcal{G} \,\}.
\]
Let $\bar{E}$ be the complement of $E$.
We write the expectation of the hitting time as follows:
\begin{align*}
    \E[T_{n,p}] 
    &= \E[T_{n,p} \cond E] \cdot \Pr(E) + \E[T_{n,p} \cond \bar{E}] \cdot (1 - \Pr(E))
    \\
    &\le \Pr(E) \cdot \E[T_{n,p}] + (1-\Pr(E)) \cdot \E[T_{n,p} \cond \bar{E}],
\end{align*}
where $\E[T_{n,p} \cond E] \le \E[T_{n,p}]$ holds since the expected hitting time starting from any state not in $\mathcal{G}$ is largest when starting from state 0.
By rearranging the terms, we have
\[
    \E[T_{n,p}] - \Pr(E)\cdot\E[T_{n,p}] \le (1-\Pr(E)) \cdot \E[T_{n,p} \cond \bar{E}]
\]
and since $\Pr(E) < 1$ we conclude
\[
    \E[T_{n,p}] \le \E[T_{n,p} \cond \bar{E}].
\]

Let $m = n \ln n$. 
In the remainder of the proof we show that 
$\E[T_{n,p} \cond \bar{E}] \le \Tmar(\epsilon) + \frac{m}{\rho}$, for $\rho$ as in \cref{lemma:rho}.
We partition the time after $T=\Tmar(\epsilon)$ into disjoint blocks of length $m$, where each block is
\[
    B_i := \{T+im,\ldots,T+(i+1)m-1\}, \forall i\ge 0.
\]
For any $r \ge 0$, define the event
\[
    F_r := \{\, T_{np} > T+rm-1 \,\},
\]
namely ``no block $B_0,...,B_{r-1}$ contains a time $t$ where $K_t=n$''.
Note that
\[
    \Pr(F_r \cond \bar{E}) \le (1-\rho)^r.
\]

Recall \cref{eq: E[T_np] split} and write:
\begin{align*}
    \E[T_{n,p} \cond \bar{E}] 
    &= \sum_{t=0}^{T-1} \Pr(T_{n,p} > t \cond \bar{E}) + \sum_{t=T}^{\infty}\Pr(T_{n,p} > t \cond \bar{E})
    \\
    &= \sum_{t=0}^{T-1} \Pr(T_{n,p} > t \cond \bar{E}) + \sum_{r=0}^{\infty} \sum_{t \in B_r}\Pr(T_{n,p} > t \cond \bar{E}).
\end{align*}
By using $\Pr(T_{n,p} > t \cond \bar{E}) \le 1$ and noting that $\Pr(T_{n,p} > t)$ is decreasing in $t$ we get
\begin{align*}
    \E[T_{n,p} \cond \bar{E}] 
    &\le T + \sum_{r=0}^{\infty} m \Pr(T_{n,p} > T + rm - 1 \cond \bar{E})
    \\
    &= T + m\sum_{r=0}^{\infty} \Pr(F_r \cond \bar{E})
    \\
    &\le T + m\sum_{r=0}^{\infty} (1-\rho)^r
    \\
    &= T + \frac{m}{\rho}.
\end{align*}
We conclude the proof having $\E[T_{n,p}] \le \E[T_{n,p} \cond \bar{E}] \le \Tmar(\epsilon) + \frac{n \ln n}{\rho}$, for $\rho$ as in \cref{lemma:rho}.
\end{proof}


\begin{proposition}\label{prop:unconditional lower bound E[Tnp]}
It holds that $\E[T_{n,p}] \ge \frac{1}{5 q_*^n}$.

\end{proposition}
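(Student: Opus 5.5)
The plan is to combine negative association with a union bound over time, and then convert the resulting tail estimate into a bound on the expectation. Throughout, write $\theta := q_*^n$.

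\textbf{Step 1 (one-step probability of completion).} I will first show that for every fixed $t$,
\[
    \Pr(S_t=[n]) \le \theta .
\]
By \cref{lem:X_it neg ass}, the indicators $(X_{i,t})_{i\in[n]}$ are negatively associated. The event $\{S_t=[n]\}$ is the event that the product $\prod_i X_{i,t}$ equals $1$, and this product is a non-decreasing function of the indicators. Applying \cref{def:neg ass rvs} iteratively, each time splitting off one coordinate against the product of the remaining ones (all such products are non-decreasing), gives
\[
    \Pr(S_t=[n]) = \E\Big[\prod_i X_{i,t}\Big] \le \prod_i \E[X_{i,t}] = q_t^n .
\]
Since $q_0=0<q_*$, \cref{lemma: monotonicity} shows that $q_t$ increases towards $q_*$, so $q_t\le q_*$ for all $t$. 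Hence $q_t^n\le \theta$. This step rigorously replaces the spatial-independence part of \cref{ass:spatial temporal independence} by an inequality that goes in the right direction.

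\textbf{Step 2 (tail bound).} A union bound over times gives
\[
    \Pr(T_{n,p}\le t) \le \sum_{j=0}^{t}\Pr(S_j=[n]) \le (t+1)\,\theta .
\]
Note that the temporal-independence assumption is not needed here; the union bound handles the dependence across time.

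\textbf{Step 3 (from tail to expectation).} Set $t_0 := \lfloor 1/(2\theta)\rfloor$. By Step 2,
\[
    \Pr(T_{n,p}\ge t_0) = 1-\Pr(T_{n,p}\le t_0-1) \ge 1-t_0\theta \ge \tfrac12 .
\]
Markov-style reasoning (nonnegativity of $T_{n,p}$) then yields
\[
    \E[T_{n,p}] \ge t_0\Pr(T_{n,p}\ge t_0) \ge \tfrac12\Big(\tfrac{1}{2\theta}-1\Big) = \tfrac{1}{4\theta}-\tfrac12 .
\]
This is at least $\tfrac{1}{5\theta}$ exactly when $\tfrac{1}{20\theta}\ge\tfrac12$, that is, when $\theta\le 1/10$.

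\textbf{Step 4 (the remaining case $\theta>1/10$).} Here the target $\tfrac{1}{5\theta}$ is less than $2$. The process must collect at least $n$ distinct coupons starting from $S_0=\emptyset$, so $T_{n,p}\ge n$ deterministically. For $n\ge2$ this already gives $\E[T_{n,p}]\ge 2$. For $n=1$ we have $T_{1,p}\ge1$, and one checks directly that $q_*=1$ when $n=1$, so the target is $1/5\le 1$.

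\textbf{Main obstacle.} The only delicate point is Step 1, namely justifying that negative association implies the product-of-marginals upper bound for the "all ones" event. This is the standard consequence obtained by induction on disjoint index sets, as noted above. Everything else is bookkeeping of constants: the floor in $t_0$ and the small-$\theta$ versus large-$\theta$ case split. A tail-sum computation, summing $1-(t+1)\theta$ over $t<t_0$, would give a slightly better constant, but the Markov argument in Step 3 suffices for the constant $5$.
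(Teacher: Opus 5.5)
Your proof follows the same route as the paper's. Negative association gives $\Pr(S_t=[n])\le q_t^n\le q_*^n$, and a union bound over times gives $\Pr(T_{n,p}\le t)\le (t+1)q_*^n$. The tail bound is then turned into a bound on the expectation; your $\E[T_{n,p}]\ge t_0\Pr(T_{n,p}\ge t_0)$ versus the paper's partial tail sum up to $\bar t$ is a cosmetic difference.

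Your explicit case split is actually more careful than the paper. The paper's final step $\bar t/2\ge 1/(5q_*^n)$ silently assumes $q_*^n$ is small; at $p=0$ one has $\bar t=-1$ and it fails. The deterministic bound $T_{n,p}\ge n$, coming from $|S_t|\le t$, is exactly what closes the regime $q_*^n>1/10$ for $n\ge 2$.

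There is one slip, in the $n=1$ subcase of Step 4: $q_*$ is not $1$ there. From $q_*=\frac{1-p}{1-p+np}$ you get $q_*=1-p$ when $n=1$, so $\theta=1-p$ and the target is $\frac{1}{5(1-p)}$. For $p\in(4/5,9/10)$ you are in the case $\theta>1/10$, yet the target lies in $(1,2)$, so $T_{1,p}\ge 1$ does not suffice.

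The repair is immediate. For $n=1$ the single coupon is drawn in every round and kept with probability $1-p$, independently across rounds. So $T_{1,p}$ is geometric on $\{1,2,\dots\}$ with success probability $1-p$, and $\E[T_{1,p}]=\frac{1}{1-p}=\frac{1}{q_*}\ge\frac{1}{5q_*}$. The degenerate case $p=1$, where $\theta=0$ and your $t_0$ is undefined, is trivial since then $T_{n,p}=\infty$.
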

\begin{proof}
Recall the definition of $X_{i,t}$ in \cref{eq:X_it} and that they are negatively associated (\cref{lem:X_it neg ass}).
By using negative association and that $q_t \le q_*$ for all $t$ (by \cref{lemma: monotonicity}, since $q_0 = 0$), we have
\[
    \Pr(S_t = [n]) = \Pr\left( \bigwedge_{i=1}^n \{i \in S_t \}\right)
    \le \prod_{i=1}^n \Pr(i \in S_t)
    \le q_*^n.
\]
Moreover, by a union bound
\[
    \Pr(T_{n,p} \le t) = \Pr(\exists j \le t : S_j = [n])
    \le \sum_{j=0}^t \Pr(S_j = [n])
    \le (t+1) q_*^n
\]
and therefore
\[
    \Pr(T_{n,p} > t) \ge 1 - (t+1) q_*^n.
\]
Let $\bar{t} := \big\lfloor \frac{1 - 2 q_*^n}{2 q_*^n} \big\rfloor$ and note that for every $t \le \bar{t}$ we have
\[
    \Pr(T_{n,p} > t) \ge \frac{1}{2}.
\]
Using the tail-sum formula of the expectation and the previous observation we get
\[
    \E[T_{n,p}] = \sum_{t=0}^{\infty} \Pr(T_{n,p} > t) 
    \ge \sum_{t=0}^{\bar{t}} \Pr(T_{n,p} > t) 
    \ge \frac{\bar{t}}{2}
    \ge \frac{1}{5q_*^n}.
    \qedhere
\]
\end{proof}

We acknowledge that while our results correctly identify the exponential nature of the hitting time, there is a significant quantitative gap between the lower and upper bounds in the non-classical regimes. 
Our lower bound of $\approx 1/q_*^n$ is expected to be tight, matching the mean-field analysis. However, our rigorous upper bound is $\approx 1/(q_*(1-p)^{n^2 \ln n})$.
This gap stems from the conservative nature of our block-partitioning argument: to ensure a rigorous result without assuming temporal independence, we bound the probability of completing the set within a safe window where no losses occur. We conjecture that the true hitting time is much closer to the lower bound. 
Moving from this block-based approach to a more sophisticated analysis of the Markov chain is key for improving the bound.

\bibliographystyle{alpha}
\bibliography{references}

\end{document}